




\documentclass{ecai} 



\usepackage{latexsym}
\usepackage{amssymb}
\usepackage{amsmath}
\usepackage{amsthm}
\usepackage{booktabs}
\usepackage{enumitem}
\usepackage{graphicx}
\usepackage{color}
\usepackage{algorithm}
\usepackage{pdfpages}
\usepackage{marginnote}
\usepackage[noEnd=true]{algpseudocodex}
\usepackage[switch]{lineno}
\usepackage{xcolor}
\usepackage{enumitem}
\usepackage{framed,graphicx,xcolor}

\newcommand{\subf}[2]{%
  {\small\begin{tabular}[t]{@{}c@{}}
  #1\\#2
  \end{tabular}}%
}



\newtheorem{theorem}{Theorem}
\newtheorem{lemma}[theorem]{Lemma}

\newtheorem{proposition}[theorem]{Proposition}

\newtheorem{definition}{Definition}
\newtheorem{example}{Example}
\newtheorem{observation}{Observation}

\newenvironment{proofidea}{\paragraph{Highlights of Proof Ideas:}}{\hfill$\square$}



\newcommand{\BibTeX}{B\kern-.05em{\sc i\kern-.025em b}\kern-.08em\TeX}


\begin{document}
\definecolor{shadecolor}{gray}{0.9}


\begin{frontmatter}


\paperid{1360} 


\title{Equilibria in multiagent online problems with predictions}


\author[A]{\fnms{Gabriel}~\snm{Istrate}\thanks{Corresponding Author. Email: gabriel.istrate@unibuc.ro.}}
\author[B]{\fnms{Cosmin}~\snm{Bonchi\c{s}}}
\author[B]{\fnms{Victor}~\snm{Bogdan}}

\address[A]{University of Bucharest}
\address[B]{West University of Timi\c{s}oara}


\begin{abstract}
We study the effect of \emph{predictions} on competitive ratio equilibria in multi-agent online problems, using a recently defined \emph{multi-agent ski-rental problem} as a testbed. We characterize predictionless equilibria, formally define equilibria with predictions, and show that (for multiagent ski-rental) they have properties reminscent of their predictionless counterparts. 

We also investigate how using an algorithm with predictions can benefit an individual agent when playing an equilibrium strategy profile for multiagent ski-rental. We give a theoretical characterization of the consistency and robustness achieved by the algorithm, and experimentally benchmark its average competitive ratio. 
\end{abstract}

\end{frontmatter}


\section{Introduction}

In many real-life scenarios  agents interact strategically, increasingly assisted by \emph{predictors} which provide forecasts of the behavior of the other agents (as well as on the optimal course of action to adopt). A convincing example is the use of navigation assistants:  in such apps the driver receives information about the current congestion in the network, as well as an estimate of the quickest route to the desired destination, given current traffic conditions. It is up to the individual agent to follow the recommendations of such an assistant: whereas such apps are  generally trustworthy, they may fail to display last-minute accidents that impact the choice of the optimal route. Similarly, the availability of pre-election polls may induce voters (in elections with more than two candidates) to behave strategically. Conversely, (doctored) opinion polls can been used strategically, aiming to help certain candidates by influencing voter turnout and behavior\footnote{This has recently happened in real-life elections in Romania. The strategic role of predictions in voting is a fascinating topic for further research.}. From an individual voter perspective one cannot, therefore, trust such "predictions", and must use a decision algorithm that is robust to the perceived risk of using untrusted polls. 

Given these considerations, it should come as no surprise that  augmenting algorithms by \emph{predictors}   \cite{algo-predictions} is one of the most exciting recent research directions in algorithmics. In the special case of online algorithms, a framework \cite{lykouris2018competitive,purohit2018improving} that already became standard benchmarks the quality of an algorithm by two competitive ratios: its \emph{consistency}, 
 the competitive ratio obtained for a perfect predictor, and its \emph{robustness}, defined as the competitive ratio for pessimal (even adversarial) predictors. The two measures are generally opposed, and algorithms have been proposed \cite{purohit2018improving} trading one for the other. 

Recently a proposal for exteding the analysis of online algorithms with preditions to a multiagent setting has been made \cite{anon2025}. A multiagent extension of the ski-rental problem was introduced, and studied assuming that each agent is endowed with  \textit{two predictors}: one for its (future) behavior, and one for the behavior of all other agents. The goal of \cite{anon2025} was to characterize the best possible competitive ratios attainable in the four cases, corresponding to each of these predictors being exact and pessimal, respectively. However, no attempt was made in \cite{anon2025} to actually define a game-theoretic notion of equilibrium appropriate for agents using algorithms with predictions.  

\noindent\textbf{Contributions.}  Our main conceptual contribution is \textbf{the  definition of a notion of equilibrium with predictions} for multiagent competitive online problems \cite{borodin2005online}. We investigate such equilibria in the multiagent ski-rental problem. We aim to answer the following questions:  
\vspace{-5mm}
\begin{shaded} 
\begin{itemize}[leftmargin=*]
\item[(1).] what properties do equilibria with (and without) predictions for the multiagent ski rental problem have?  what is the competitive ratio achieved by agents playing  such equilibria?
\item[(2).] (how) can algorithms employ predictions to improve their competitive ratio when playing an equilibrium strategy profile?
\end{itemize} 
\end{shaded} 
\vspace{-2mm}
We give the following answers to the questions above: 
\begin{shaded}
\begin{itemize} 
\item[-] We characterize  (Theorem~\ref{thm:predictionless})  (predictionless) competitive ratio equilibria, as well as (Theorem~\ref{thm:ptb}) those arising when agents predict others (and respond rationally), but don't self-predict.  
\item[-] We define equilibria with predictions and show (Theorem~\ref{thm:ptb2}) that they have properties reminiscent of their counterparts in the case with no self-predictions. 
\end{itemize} 
\end{shaded} 
We also give an answer to question (2) by studying (theoretically and experimentally) an algorithm (introduced in \cite{anon2025}) with a tunable parameter $\lambda=1$.  One can implement the equilibrium behavior of Theorem~\ref{thm:ptb} by having all agents use  Algorithm~\ref{zeroprime-alg} with parameter $\lambda =1$. We investigate the scenario when (unknown to other agents) one particular agent deviates from this equilibrium behavior by using a different value of $\lambda$. Our conclusions, in a nutshell, are as follows:  
\begin{shaded} 
\begin{itemize} 
\item[-] We can precisely characterize theoretically (Theorem~\ref{eq-one-prediction}) the worst-case consistency and robustness of Algorithm~\ref{zeroprime-alg} on all equilibria. Giving such guarantees for the algorithm was not feasible when run assuming random models for others' behavior \cite{anon2025}. In particular, for certain values of $\lambda$ \textit{using correct predictors provably helps in achieving a lower competitive ratio}. 
\item[-] When investigated experimentally, some of the equilibria turn out to be "trivial", in that the predictionless algorithm is able to find optimal (1-competitive) solutions for most instances. For such equilibria adding self-predictions is pointless.
\end{itemize} 
\end{shaded} 
\begin{shaded} 
\begin{itemize} 
\item[-] For "nontrivial" equilibria, the behavior of is Algorithm~\ref{zeroprime-alg} is qualitatively similar to that of the corresponding algorithm for the multiagent ski-rental with prediction problem \cite{purohit2018improving}: the performance of the algorithm with predictions is best for exact predictors, and degrades when prediction error increases. It seems to become more effective the later the day of license acquisition at equilibrium is. 
\end{itemize}
\end{shaded}


\noindent \textbf{Broader impact.} The classical notion of Nash equilibrium implicitly assumes that agents perfectly predict the behavior of the other agents, and best respond to this prediction. It is natural to inquire about a notion of "equilibrium with predictions" in settings where agents may be risk-averse (rather than risk neutral). Such agents are able to strategically exploit the predictor, obtaining better results when the predictor is correct, even in the presence of strategic behavior.  

Our paper provides a plausible first possible formulation for such a notion for competitive online games with predictions, and provides a proof-of-concept analysis for the newly introduced formulation that showcases its feasibility. It is our hope that the current problem brings attention to the important aspects of the interplay of prediction and optimization in multiagent systems, and that the formalism we introduce proves useful beyond multiagent ski-rental. 

More generally, we hope that our research brings more attention from the scientific community to the strategic (game-theoretic) aspects of algorithms with predictions, and that it is adapted to general non-cooperative models, as well as to more realistic versions of game-theoretic scenarios, such as \emph{dynamic games} \cite{bacsar1998dynamic}, in particular to open-loop models of such games \cite{fudenberg1988open}. 

\noindent \textbf{Outline. } The plan of the paper is as follows: in Section~\ref{ski-rental} we define the main problem we are interested in (as well as that of a generalization of the ordinary ski-rental problem that will be useful in its analysis). In Section~\ref{sec:prophet} we investigate a case we  consider as a baseline: that of equilibria without self-predictions.  We then situate equilibria with predictions in a decision-theoretic perspective and formally define them in Section~\ref{eq:predictions}. In  Section~\ref{sec:ec:predictions} we investigate the extent to which such equilibria resemble their predictionless counterparts. Finally, in Sections~\ref{sec:5} we benchmark the impact of using predictions on competitive ratio equilibria. 

\textbf{In this manuscript we will present the main ideas of our proofs}, relegating the full details to the Appendix. 

\section{Preliminaries} 

An \emph{online algorithm} $A$ takes its input $x$ from a set of possible inputs $\Omega$. Such inputs have a sequential nature, are not known from the outset, being disclosed as time goes by. 
Algorithm $A$ is called \emph{$\gamma$-competitive} iff there exists a real number $d$ (which may depend on $\gamma$, but not on $x$) such that $cost(A(x))\leq \gamma \cdot cost(OPT(x))+d$ for all $x$. When $d=0$ we call algorithm $A$ \textit{strongly $\gamma$-competitive}. By forcing somewhat the language, we refer to the quantity 
$\frac{cost(A(x))}{cost(OPT(x))}$ as the \emph{competitive ratio of algorithm $A$ on input $x$}, denoted by $c_{A}(p)$. We will also denote by $c_{OPT}(x)$ the optimal competitive ratio of an algorithm on input $x$. This assumes the existence of a class $\mathcal{A}$ of online algorithms from which $A$ is drawn, hence $c_{OPT}(x)=\inf(c_{A}(x):A\in \mathcal{A}).$  Other standard concepts in the competitive analysis of online algorithms are reviewed e.g. in \cite{borodin2005online}.

An online algorithm \emph{with predictions} uses as input both $x\in \Omega$ and a \emph{predictor $\widehat{Z}$} for some unknown quantity $Z(x)$ in a "prediction space" $\mathcal{P}$. 
We benchmark the performance of an 
an online algorithm $A$ with predictions on an instance $x$ as follows: instead of a single-argument function $A(x)$ we use a \emph{three-argument function} $A(x,Z,\widehat{Z})$. The following approach was proposed in \cite{purohit2018improving}: given a predictor $\widehat{Z}$ for the future, the performance of an online algorithm is measured not by one, but by a pair of competitive ratios: \textit{consistency},  the competitive ratio achieved in the case where the predictor is perfect, and \textit{robustness}, the competitive ratio in the case where the predictor is pessimal.  We stress that in \cite{purohit2018improving,gollapudi2019online,angelopoulos2020online,wei2020optimal} consistency/robustness are \textbf{not} aggregated into a single performance measure. Instead, one computes the Pareto frontier of the two measures, and/or optimizes consistency for a desired degree of robustness.

\begin{definition} An online algorithm with predictions $A$ is \emph{$\gamma$-robust  on input $x$} iff there exists a constant $d$ (that may depend on $\gamma$, but not on $\widehat{Z}$) such that for all values $\widehat{Z}$ of the predictor 
\begin{equation} 
cost(A(x,\widehat{Z}))\leq \gamma \cdot cost(OPT(x))+d
\end{equation} 
Strong $\gamma$-robustness is obtained in the special case when $d=0$. The algorithm is \emph{$\beta$-consistent} if 
 \begin{equation} 
 cost(A(x,\widehat{Z}=Z))\leq \beta\cdot cost(OPT(x)).
 \end{equation}
  If the guarantees hold for every input $x$ (with the same constant $d$) we will call the algorithm  (strongly) $\gamma$-robust and $\beta$-consistent. 
\label{def:one}
\end{definition} 


Consider a $n$-person competitive online game \cite{engelberg2016equilibria}. 
A \emph{strategy profile} in such a game is a vector of online algorithms $W=(W_1,W_2,\ldots, W_n)$, one for each player. Given strategy profile $W,$ denote by $BR_i(W)$ the set of strategy profiles where the $i$'th agent plays a best-response (i.e. an algorithm minimizing its competitive ratio) to the other agents playing $W$. A \textbf{competitive ratio equilibrium} is a strategy profile $(W_1,W_2,\ldots, W_n)$ such that each $W_i$ is a best-response to the programs of the other agents.

A \textbf{competitive online game with predictions} is the variant of the previous setting obtained by assuming that agents employ instead algorithms with predictions. We will consider such games under a \textbf{two-predictor model}: we  assume that every agent $i$ comes with two predictors - a \emph{self predictor} for its own future, and an \emph{others' predictor},  for the actions of other agents.
A formalization of this concept was sketched in \cite{anon2025}. The reader might want to consult this paper (although such a general definition is not really needed for understanding the results in this paper). 

\section{The Multiagent Ski-Rental Problem}
\label{ski-rental}
The main problem we are concerned with in this paper is the following \emph{multiagent ski-rental problem}: 

\begin{definition}[Multiagent Ski Rental \cite{anon2025}] 
$n$ agents are initially active but may become inactive.\footnote{once an agent becomes inactive it will be inactive forever.} Active agents need a resource for their daily activity. Each day,  active agents have the option to (individually) rent the resource, at a cost of 1\$/day. They can also cooperate in order to to buy a group license that will cost $B>1$ dollars. For this, each agent $i$ may pledge some 
amount $w_r>0$ or \emph{refrain from pledging} (equivalently, pledge 0).\footnote{We assume that both pledges $w_r$ and the price $B$ are integers.} If the total sum pledged is at least $B$ then the group license is acquired,\footnote{When the license is overpledged agents pay their pledged sums.} and the use of the resource becomes free for all remaining active agents from that moment on. Otherwise (we call such pledges \emph{inconsequential}) the pledges are nullified. Instead, every active agent must (individually) rent 
the resource for the day.  Agents are strategic, in that they care about their overall costs. They are faced, on the other hand, with deep uncertainty concerning the number of days they will be active. So they choose instead to minimize their competitive ratio, rather than their total cost. 
\end{definition}

We will also use the following variant of classical ski-rental: 

\begin{definition}[Ski Rental With Known Varying Prices] An agent is facing a ski-rental problem where the buying price varies from day to day (as opposed to the cost of renting, which is always 1\$). Denote by $0\leq p_i\leq B$ the cost of buying skis \textbf{exactly on day $i$}, and let $P_i = i-1+p_i$ be the \textbf{total cost if buying exactly on day $i$} (including the cost of renting on the previous $i-1$ days). 
Finally, let $p=(p_{i})_{i\geq 1}$. We truncate $p$ at the first $i$ such that $p_i=0$ (such an $i$ will be called a \emph{free day}). Call a day $i$ such that $p_i=1$ a \emph{bargain day}. Define $
M_{*}(p)=min(P_i:i\geq 1)$ and, for $t\geq 1$, $Q_t= min(P_i:i\geq t)$.

\label{ski-rental-known-varying} 
\end{definition} 

The following classes of algorithms pertain to ski-rental with varying prices and multiagent ski-rental: 

\begin{itemize}[leftmargin=*] 
\item[-] A \emph{predictionless algorithm} is a function $f:\mathbb{N}\rightarrow \mathbb{N}$. $f(t)$ represents the amount pledged by the agent at time $t$. A predictionless algorithm uses no information: it does not employ any predictions (about self-behavior or the behavior of others). 
\item[-] An \emph{algorithm with (self and others) predictions} is a function $f(k,\widehat{T},\widehat{p})$. Here $k\geq 1$ is an integer, $\widehat{T}\geq 1$ is  a prediction of the true active time of the agent, and $\widehat{p}:\mathbb{N}\rightarrow \{0,\ldots, B\}$ is a prediction of the total amount pledged by the other agents, by day. 
\item[-] An algorithm with predictions is \emph{rational} if for every $k,\widehat{T},\widehat{p}$, the agent either pledges just enough to buy the license (according to the others' predictor) or pledges 0.
 \item[-] An algorithm with predictions is \emph{simple} if whenever $\widehat{T_1},\widehat{T_2}<M_{*}(\widehat{p})$ or $\widehat{T_1},\widehat{T_2}\geq M_{*}(\widehat{p})$ then $f(k,\widehat{T_1},\widehat{p})=f(k,\widehat{T_2},\widehat{p})$. That is, what matters for the prediction is whether or not $\widehat{T}\geq M_{*}(\widehat{p})$.
\end{itemize}

\section{Multiagent Ski-Rental: The Predictionless Case} 
\label{sec:prophet}
As a base case we investigate the case of predictionless algorithms, completely characterizing competitive ratio equilibria arising from such algorithms: 

\begin{theorem} The following are true: 
\begin{itemize} 
\item[(a).] In every competitive ratio equilibrium the license is eventually bought on some day.    
\item[(b).] Consider $n$ predictionless algorithms specified by functions $(f_1,f_2,\ldots, f_n)$, and for $1\leq i\leq n$ let \[
Z_{i}=min(k-1+max(B-\sum_{j\neq i}f_{j}(k),0): 1\leq k\leq B).\]
 \textbf{Then} 
\noindent \textbf{$(f_1,f_2,\ldots, f_n)$ is a competitive ratio equilibrium iff:} 
\begin{itemize} 
\item[(i).] There exists $r$ s.t. $\sum_{j=1}^{n} f_{j}(r)\geq B$. W.l.o.g. denote by $r$ the smallest such day. Then, in fact, we have $\sum_{j=1}^{n} f_{j}(r)= B$. 

\item[(ii).] for all $i$ (if any) such that $\sum_{j\neq i} f_{j}(r)=B$ and $r\leq Z_i+1$, it holds that in fact $r=Z_i+1$ and $f_i(r)=0$.  

\item[(iii).] for all $i$  s.t.  $\exists k\leq min(r,Z_i)$ with $\sum_{j\neq i}f_{j}(k)=B-1$,  either ($k=r=Z_i$ and $f_i(r)=1$) or ($k=Z_i=r-1$, $f_{i}(r-1)=f_{i}(r)=0$, and $\sum_{j\neq i}f_{j}(k)=B$). 

\item[(iv).] for all other $i$ and all $j\neq r$,  
\[
\sum_{k\neq i} f_{k}(j)\leq B+j-1- min(j,Z_i)\frac{f_i(r)+r-1}{min(r,Z_i)}.
\]
\end{itemize} 
\end{itemize} 
\label{thm:predictionless} 
\end{theorem}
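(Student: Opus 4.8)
The plan is to analyze the competitive ratio of agent $i$ as a function of the day on which the license is acquired, and then translate the best-response condition into constraints on the functions $f_j$. First I would establish part (a): if the license is never bought, every active agent rents forever, so any agent's cost on input $T$ is $T$ while $OPT$ could buy for at most $B$, giving unbounded competitive ratio; a single agent could unilaterally deviate (for instance by pledging $B$ alone on day 1, or by matching others' pledges so that the license is bought) to obtain a bounded ratio, contradicting equilibrium. I would need to be careful here that such a deviation is actually available and improves the ratio — this uses $B>1$ and the fact that once the license is bought the agent pays at most $B+r-1$ regardless of how long it stays active.

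For part (b), the key computation is: fix all $f_j$ for $j\neq i$ and suppose the other agents' cumulative pledge on day $k$ is $S_{-i}(k)=\sum_{j\neq i}f_j(k)$. If agent $i$ wants the license bought on day $k$, it must contribute $\max(B-S_{-i}(k),0)$, so its total cost (paying pledged amount on day $k$, renting before) is $k-1+\max(B-S_{-i}(k),0)$ if it stays active at least $k$ days, and just its active time $T$ if $T<k$. The quantity $Z_i$ defined in the statement is exactly the minimum over $k\le B$ of this "cost of forcing purchase on day $k$" — so $Z_i$ is the best total cost agent $i$ can guarantee (against the fixed others) when it is active long enough, and it is never worth trying to force purchase after day $B$. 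Then I would compute $c_i(T)=\frac{cost_i(T)}{cost(OPT(T))}$, noting $cost(OPT(T))=\min(T,B)$ (for the single-agent subproblem faced by agent $i$, though one must check the group $OPT$ doesn't change this), and observe that the worst input $T$ for agent $i$ is either $T=r-1$ (rent-forever-then-license-bought-without-me, ratio $\frac{r-1}{\min(r-1,B)}$) or $T\to\infty$ (ratio $\frac{Z_i+ \text{contribution}}{B}$-type expression), so the competitive ratio of agent $i$'s strategy is the max of these two types of terms.

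With this in hand, conditions (i)–(iv) are each a case of the best-response requirement. Condition (i) (the total pledge on the purchase day equals exactly $B$) follows because overpledging strictly hurts whichever agent is overpledging — it can reduce its pledge and keep the license bought, lowering its cost without changing anyone else's incentives. Conditions (ii) and (iii) handle the boundary agents: (ii) those whose removal still leaves $\ge B$ pledged (so agent $i$ is "redundant" and should pledge 0, but then the purchase day is forced to be exactly $Z_i+1$ by an optimality/timing argument), and (iii) those who are "pivotal up to a single dollar" on some early day (they can either chip in 1 to move the purchase earlier, or refrain entirely). Condition (iv) is the generic agent: its pledges on all non-purchase days must be small enough that it is not cheaper for $i$ to force an earlier purchase — the inequality is exactly "$Z_i$ computed using the actual $f_{-i}(j)$ is not better than the equilibrium cost $f_i(r)+r-1$", rearranged. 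I expect the main obstacle to be the bookkeeping in conditions (ii)–(iv): getting the off-by-one day indices right (purchase on day $r$ versus $Z_i$ versus $Z_i+1$), handling ties in the $\min$ defining $Z_i$, and verifying that a deviation by agent $i$ genuinely does not perturb the other agents' pledge sequences in a way that invalidates the comparison — i.e. that the others' algorithms $f_j$ are functions of the day alone (predictionless), so agent $i$'s deviation cannot trigger a cascade. The forward direction (equilibrium $\Rightarrow$ (i)–(iv)) and the converse (the four conditions $\Rightarrow$ no profitable deviation) should each be proved by exhausting the cases of "which day does $i$ try to make the license get bought on, and how long does $i$ stay active", reducing everything to the two-term max above.
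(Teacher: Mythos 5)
Your overall strategy is the same as the paper's: fix the other agents' pledge functions, observe that agent $i$ then faces a single-agent ski-rental problem with known varying prices $p^i_j=\max(B-\sum_{k\neq i}f_k(j),0)$, characterize the best response there, and read off conditions (i)--(iv). The paper does exactly this, but it leans on a precise prior characterization (its Proposition~\ref{thm-full-free1}, Theorem 4.4 of the cited work) of \emph{all} optimal algorithms for ski-rental with varying prices, including the degenerate ``free day'' and ``bargain day'' cases; you instead propose to re-derive the single-agent analysis, and that is where your sketch has a concrete error. The offline optimum for agent $i$ with active time $T$ is not $\min(T,B)$ but $\min(T,\min_{k\le T}(k-1+p^i_k))$, which for large $T$ equals $\min(T,Z_i)=\min(T,M_*(p^i))$. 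Since $Z_i\le P_1\le B$ and can be strictly smaller, using $B$ in the denominator underestimates the competitive ratio and would produce the wrong constants: condition (iv) in the statement has $\min(j,Z_i)$ and $\min(r,Z_i)$ in exactly the places where your ``$\frac{Z_i+\text{contribution}}{B}$-type expression'' puts $B$. Your worst-case analysis over $T$ also needs refinement: for an algorithm that buys on day $d$ the binding adversarial horizons are $T=d$ (giving ratio $P_d/\min(d,M_*(p^i))$) and the horizons $T<d$ (giving $T/\min(T,M_T(p^i))$), not ``$T=r-1$ or $T\to\infty$'' as stated.

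The second gap is in (ii) and (iii), which you dispatch with ``an optimality/timing argument'' and ``chip in 1 or refrain.'' These two conditions are precisely the free-day and bargain-day cases of the single-agent problem, and the exact day constraints ($r=Z_i+1$ with $f_i(r)=0$ in (ii); $k=r=Z_i$ with $f_i(r)=1$, or $k=Z_i=r-1$ with the license free on day $r$, in (iii)) come from the fact that waiting for a free day is 1-competitive \emph{only} when the free day occurs no later than $M_*(p^i)+1$, and a bargain day helps only when it occurs no later than $M_*(p^i)$ --- and that these are the \emph{only} 1-competitive behaviors. Without that exact characterization (which is the content of Proposition~\ref{thm-full-free1}(a) and the uniqueness claim in part (b)), the ``only if'' direction of the theorem --- that every equilibrium must satisfy the stated equalities, not merely inequalities --- does not follow. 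Your observation that the others' algorithms are functions of the day alone, so a deviation by $i$ cannot cascade, is correct and is implicitly used by the paper as well.
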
 

\begin{proofidea} We will make use of the following characterization of best-response in the Ski-Rental Problem with Varying Prices (Theorem 4.4. in \cite{anon2025}, see that paper for a proof): 

\begin{proposition} Optimal competitive algorithms in Ski Rental with Known Varying Prices are: 
\begin{itemize}[leftmargin=*]
\item[(a).] if a free day exists on day $d=M_{*}(p)+1$ or a bargain day exists on day $d=M_{*}(p)$, then the agent that waits for the first such day, renting until then is 1-competitive. The only other 1-competitive algorithm exists when the first bargain day $b=M_{*}(p)$ is followed by the first free day $f=M_{*}(p)+1$; the algorithm rents on all days, waiting for day $f$.  
\item[(b).] Suppose case (a) does not apply, i.e. the daily prices are $p_1,p_2,\ldots, p_n \ldots $, $1< p_i\leq B$ for $1\leq i\leq M_{*}(p)$, $p_{M_{*}(p)+1}\geq 1$. Then no deterministic algorithm can be better than $c_{OPT}(p)$-competitive, where 
\begin{equation} 
 c_{OPT}(p):=
min(\{\frac{P_r}{r}:r\leq M_{*}(p)\}\cup\{ \frac{Q_{M_{*}(p)}}{M_{*}(p)}\}).
\end{equation} 
 Algorithms that buy on a day in $argmin(\{\frac{P_r}{r}:1\leq r\leq M_{*}(p)\}\cup \{\frac{P_r}{M_{*}(p)}: r\geq M_{*}(p), P_r=Q_{M_{*}(p)}\})$ are the only ones to achieve this ratio. 
\end{itemize} 
\label{thm-full-free1}
\end{proposition}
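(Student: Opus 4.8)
The plan is to reduce the problem to a classical ski-rental analysis by first pinning down both the offline optimum and the space of deterministic online strategies, then reading off the competitive ratio of each strategy as an explicit function of its buy day, and finally minimizing that function. First I would observe that, since all prices $p_i$ are known in advance, the only thing revealed online is whether the agent is still active; hence a deterministic online algorithm is completely described by the (first) day $k\in\{1,2,\dots\}\cup\{\infty\}$ on which it buys while still active, renting on all earlier days ($k=\infty$ meaning ``never buy''). I would then establish the key identity
\[
cost(OPT(T)) = \min\big(T,\ M_{*}(p)\big),
\]
where $T$ is the true active time. The two cases are checked directly: for $T<M_{*}(p)$ every buy option costs $P_j\ge M_{*}(p)>T$, so renting all $T$ days is optimal; for $T\ge M_{*}(p)$ one can always secure cost $M_{*}(p)$ (by renting all days when $T=M_{*}(p)$, or by buying on a global minimizer day once it is reached), and no strategy can beat the global buy minimum. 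I would also note that truncation at the first free day is consistent with this formula, since after a free day the resource is free and ``buying'' on day $M_{*}(p)+1$ costs exactly $M_{*}(p)$.

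Next I would compute the worst-case ratio of the buy-day-$k$ strategy. Its cost is $P_k$ when $T\ge k$ and $T$ when $T<k$; dividing by $\min(T,M_{*}(p))$ and taking the supremum over $T$ (attained at $T=k$) gives
\[
c_k = \frac{P_k}{\min\big(k,\,M_{*}(p)\big)}
= \begin{cases} P_k/k, & k\le M_{*}(p),\\ P_k/M_{*}(p), & k\ge M_{*}(p).\end{cases}
\]
Here one must verify that the regime $T<k$ never dominates; it does not, because $P_k\ge k-1$, so the ``rented too long'' ratio $(k-1)/M_{*}(p)$ is always dominated by $P_k/M_{*}(p)$. The never-buy strategy is $\infty$-competitive unless the sequence is truncated by a free day, in which case it coincides with buying on that free day and is already covered.

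Finally I would minimize $c_k$ over $k$. Splitting the range at $M_{*}(p)$ and using $\min_{k\ge M_{*}(p)}P_k = Q_{M_{*}(p)}$ yields
\[
c_{OPT}(p) = \min\Big(\min_{r\le M_{*}(p)} \tfrac{P_r}{r},\ \tfrac{Q_{M_{*}(p)}}{M_{*}(p)}\Big),
\]
and the minimizing days are exactly the set named in part (b), establishing both optimality and uniqueness there. For part (a) I would solve $c_k=1$: in the range $k\le M_{*}(p)$ this forces $P_k=k$, i.e. $p_k=1$ together with $k=M_{*}(p)$ (a bargain day at $M_{*}(p)$); in the range $k\ge M_{*}(p)$ it forces $P_k=M_{*}(p)$, i.e. $p_k=0$ and $k=M_{*}(p)+1$ (a free day at $M_{*}(p)+1$), since $k>M_{*}(p)+1$ would require $p_k<0$. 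These are the only $1$-competitive buy days, which gives simultaneously the ``wait for the first free/bargain day'' algorithm and, precisely when both a bargain day at $M_{*}(p)$ and a free day at $M_{*}(p)+1$ are present, the single additional optimal algorithm that rents on all days until the free day.

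The main obstacle I anticipate is bookkeeping rather than conceptual: the quantity $M_{*}(p)$ is used both as the minimal total-cost \emph{value} and as a \emph{day index}, and this must be reconciled with the truncation at the first free day, which can make the first minimizer day strictly smaller (or, in the free-day case, exactly one larger) than $M_{*}(p)$. Keeping these two roles straight when verifying $cost(OPT(T))=\min(T,M_{*}(p))$, and confirming in each regime that the $T<k$ branch of the ratio is dominated, is where the care is needed; once the closed forms for $cost(OPT(T))$ and $c_k$ are in hand, parts (a) and (b) follow from the elementary casework above.
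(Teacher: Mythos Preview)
Your argument is correct. The paper does not actually supply a proof of this proposition; it is quoted from \cite{anon2025} with the parenthetical ``see that paper for a proof'', so there is no in-paper proof to compare against.

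Your approach---parametrize deterministic online strategies by their buy day $k$, establish $cost(OPT(T))=\min(T,M_{*}(p))$, derive the closed form $c_k=P_k/\min(k,M_{*}(p))$, and then minimize---is the natural one and all the steps check out. In particular, the verification that the ``rent-too-long'' branch $T<k$ never dominates is correct: for $k\le M_{*}(p)$ that branch gives ratio~$1$, and for $k>M_{*}(p)$ it gives at most $(k-1)/M_{*}(p)\le P_k/M_{*}(p)$. The identification of the $1$-competitive days (forcing $p_k=1$ with $k=M_{*}(p)$, or $p_k=0$ with $k=M_{*}(p)+1$) is also right and recovers exactly the two algorithms in part~(a). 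The bookkeeping concern you flag about $M_{*}(p)$ playing a dual role as value and as day index is real but resolves cleanly once one notes that any minimizer $i_{*}$ of $P_j$ necessarily satisfies $i_{*}\le M_{*}(p)+1$, so the formula $OPT(T)=\min(T,M_{*}(p))$ holds for all integer $T$.
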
 

We use this result as follows: We take the single agent perspective in multi-agent ski rental, and investigate the best response to the other agents' behavior. Fixing the algorithms of the other agents means that the agent knows \emph{how much money it would need to contribute on any given day to make the group license feasible}. This effectively reduces the problem to computing a best response to the ski-rental problem with known varying prices, as described in Definition~\ref{ski-rental-known-varying}.
\begin{itemize}[leftmargin=*] 
\item[(a).] Consider the perspective of agent 1. It will never have a free day, hence its best response is to pledge the required amount on its optimal day/first bargain day (if it exists), at the required cost.
\item[(b).]  

\begin{itemize}[leftmargin=*] 

\item[(i).] To have a competitive ratio equilibrium, every agent $i=1,\ldots, n$ has to play a best-response in the ski-rental problem with varying prices $p^i=(p_j)_{j\geq 1}$,  $p_j=max(B-\sum_{k\neq i}f_{k}(j),0)$.  By point (a), the license is bought on some day $r$. So $\sum_{j=1}^{n} f_{j}(r)\geq B$. In fact $\sum_{j=1}^{n} f_{j}(r)=B$: if it were not the case then any participating agent could lower its price paid on day $r$ (hence its competitive ratio) by pledging just enough to buy the license. 

\item[(ii.)] Consider  such an agent $i$. The hypothesis implies the fact that it will have a free day on day $r\leq M_i+1$. By Proposition~\ref{thm-full-free1}(a), it is optimal (1-competitive) for the agent $i$ not to pledge any positive amounts and simply wait for its free day.

\item[(iii).] Consider such an agent $i$. The hypothesis implies the fact that it has a bargain day on a day $k\leq min(r,M_i)$. Given that the license is bought on day $r$, it is either the case that the first such day is $k=r$ or that $k=r-1$ and $r$ is a free day for agent $i$. In this case the agent doesn't have to pledge on day $r-1$ since it gets the license for free on day $r$.

\item[(iv).] Consider now an agent $i$ such that $f_i(r)> 1$. From the agent's perspective $P_j=j-1+B-\sum_{k\neq i} f_{k}(j)$. In particular $M_{*}(p^i)\leq P_{1}\leq B$. 
The agent has no free/bargain days. We show (see the Appendix) that day $r$ is an optimal day to buy for agent $r$. 

\end{itemize} 
\end{itemize} 
\end{proofidea} 

The following result complements Theorem~\ref{thm:predictionless}, by computing the competitive ratios of agents in cases (ii),(iii),(iv) of the Theorem:  

\begin{theorem}
Consider a competitive ratio equilibrium as described in points (i)-(iv) from Theorem~\ref{thm:predictionless}. Then: 
\begin{itemize} 
\item[-] if $i$ is an agent in case (ii) then $c_{i}=1$. 
\item[-] if $i$ is an agent in case (iii) then $c_{i}=1$.
\item[-] if $i$ is an agent in case (iv) then $c_{i}=\frac{f_i(r)+r-1}{min(r,M_i)}$.
\end{itemize} 
\label{foo}
\end{theorem}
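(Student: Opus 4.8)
The plan is to reuse, for each of the three bullets, the single-agent reduction already underlying the proof of Theorem~\ref{thm:predictionless}. Fix the equilibrium $(f_1,\dots,f_n)$ and an agent $i$. Because $i$ plays a best response, its equilibrium competitive ratio $c_i$ equals the optimal competitive ratio $c_{OPT}(p^i)$ of the Ski-Rental with Known Varying Prices instance $p^i=(p^i_j)_{j\ge1}$ with $p^i_j=\max\bigl(B-\sum_{k\ne i}f_k(j),0\bigr)$; write $P_j=j-1+p^i_j$ and $Q_t=\min\{P_s:s\ge t\}$ as in Definition~\ref{ski-rental-known-varying}. I would first record two bookkeeping facts. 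Fact~(1): no day $j<r$ is a free day for agent $i$, i.e.\ $p^i_j>0$ for all $j<r$, since $\sum_{k\ne i}f_k(j)\le\sum_{l}f_l(j)<B$ by the minimality of $r$ in Theorem~\ref{thm:predictionless}(i). Fact~(2): the quantity $Z_i$ of Theorem~\ref{thm:predictionless} equals $M_*(p^i)$, because $P_1\le B$ while $P_k\ge k-1\ge B$ for $k>B$, so the minimum of $P_k$ over $1\le k\le B$ is already the global minimum; I will write $M_i$ for this common value (matching the notation in the statement of Theorem~\ref{foo} and in the proof sketch of Theorem~\ref{thm:predictionless}). With this in hand, each bullet is a direct reading of $c_{OPT}(p^i)$ off Proposition~\ref{thm-full-free1}.

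For an agent $i$ in case~(ii): the equality $\sum_{k\ne i}f_k(r)=B$ makes day $r$ a free day for agent $i$, and by Fact~(1) it is the \emph{first} one; by the equilibrium condition it sits exactly on day $r=Z_i+1=M_i+1=M_*(p^i)+1$, so Proposition~\ref{thm-full-free1}(a) gives $c_{OPT}(p^i)=1$, hence $c_i=1$. For an agent $i$ in case~(iii): in the first sub-case $k=r=Z_i$, so $p^i_r=B-(B-1)=1$ is a bargain day on day $M_*(p^i)$, which agent $i$'s equilibrium strategy buys on (it pledges $f_i(r)=1$); in the second sub-case $k=r-1=Z_i$ is a bargain day on day $M_*(p^i)$ and day $r=M_*(p^i)+1$ is a free day for agent $i$ (the other agents pledge $B$ there), on which the equilibrium strategy obtains the license for free after renting throughout. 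In either sub-case Proposition~\ref{thm-full-free1}(a) --- using its clause on the ``only other'' $1$-competitive algorithm for the second sub-case --- yields $c_{OPT}(p^i)=1$, hence $c_i=1$.

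For an agent $i$ in case~(iv) we have $f_i(r)>1$, so by Theorem~\ref{thm:predictionless}(i) $\sum_{k\ne i}f_k(r)=B-f_i(r)<B$ and therefore $p^i_r=f_i(r)>1$: day $r$ is neither a free nor a bargain day. Moreover, if agent $i$'s active time is exactly $r$ then its equilibrium strategy pays $(r-1)+f_i(r)\ge r+1$ while the optimum pays at most $r$, so $c_i\ge(r+1)/r>1$; hence the special configuration of Proposition~\ref{thm-full-free1}(a) cannot hold for $p^i$, and case~(b) applies, giving $c_i=c_{OPT}(p^i)=\min\bigl(\{P_t/t:1\le t\le M_i\}\cup\{Q_{M_i}/M_i\}\bigr)$. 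The crux is to identify this minimum with $P_r/\min(r,M_i)$, where $P_r=r-1+p^i_r=f_i(r)+r-1$. I would do this by rewriting Theorem~\ref{thm:predictionless}(iv): using $\sum_{k\ne i}f_k(j)=B+j-1-P_j$ on days with $p^i_j>0$ and $\sum_{k\ne i}f_k(j)\ge B$ on days with $p^i_j=0$, condition~(iv) becomes the single inequality $P_j\ge\min(j,M_i)\cdot\frac{P_r}{\min(r,M_i)}$ for all $j\ne r$. Specializing this to $1\le t\le M_i$ gives $P_t/t\ge P_r/\min(r,M_i)$; specializing it to $s\ge M_i$ gives $Q_{M_i}/M_i\ge P_r/\min(r,M_i)$ (the day $s=r$, if it occurs, contributing the trivial term $P_r/M_i=P_r/\min(r,M_i)$); and $P_r/\min(r,M_i)$ is itself one of the two kinds of candidate ratios --- it is the term $P_r/r$ when $r\le M_i$, while when $r>M_i$ the same inequality forces $Q_{M_i}=P_r$, so it equals $Q_{M_i}/M_i$. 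Hence the minimum equals $P_r/\min(r,M_i)$, i.e.\ $c_i=\frac{f_i(r)+r-1}{\min(r,M_i)}$, as claimed. (This re-derivation also recovers the Appendix fact that day $r$ is an optimal buying day for agent $i$.)

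I expect the main obstacle to be the bookkeeping in case~(iv): faithfully translating condition~(iv) into the $P_j$-language while covering both free and non-free days $j$, and splitting cleanly on whether $r\le M_i$ or $r>M_i$ (equivalently, whether $r$ realizes the minimum in the $\{P_t/t\}$ part or in the $Q_{M_i}$ part of the $\mathrm{argmin}$ set of Proposition~\ref{thm-full-free1}(b)), including a couple of small degenerate configurations (e.g.\ $\min(r,M_i)$ small, or the minimum $M_i$ attained at several days). Cases~(ii) and~(iii), by contrast, are essentially immediate once Fact~(1) pins down the first free/bargain day and Proposition~\ref{thm-full-free1}(a) is invoked; the only mild care needed there is to check that the equilibrium strategy $f_i$ actually realizes the $1$-competitive behavior guaranteed by that proposition (pledging $1$ on day $r$, respectively renting throughout and waiting for the free day).
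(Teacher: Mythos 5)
Your proposal is correct and follows essentially the same route as the paper: reduce agent $i$'s situation to the induced Ski-Rental with Known Varying Prices instance $p^i$, identify $Z_i=M_*(p^i)$, read off $c_i=1$ from the free/bargain-day clause of Proposition~\ref{thm-full-free1}(a) in cases (ii) and (iii), and obtain $\frac{f_i(r)+r-1}{\min(r,M_i)}$ in case (iv) from the optimality of buying on day $r$ (which the paper establishes inside Theorem~\ref{thm:predictionless}(iv) and then simply reads off as cost over $OPT_r=\min(r,Z_i)$, whereas you re-derive it by evaluating the $c_{OPT}(p^i)$ formula of Proposition~\ref{thm-full-free1}(b) using condition (iv)). Your version is more detailed than the paper's one-line cost/optimum computation, but the underlying argument is the same.
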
 
\begin{proofidea} 
In the first case the agent has a free day. In the second one it has a bargain day. In the third case we use the fact that 
$OPT_r=min(r,M_i)$, while the agent rents for $r-1$ days and pledges $f_{i}(r)$ on day $r$, paying a total of $f_{i}(r)+r-1$ as a result. 
\end{proofidea} 

\subsection{Equilibria of Rational Agents with No Self-Predictions}
\label{no-self-pred}
We now move to the case where agents predict the others but don't self predict. We will make a further assumption, that make the equilibria especially simple: that all agents use rational algorithms\footnote{There is a subtle difference between the algorithms used in this section (whose rationality requires  predicting the aggregate behavior of other agents) and the predictionless algorithms of the previous section (that may stumble on the "correct pledge" by chance.} This is a reasonable assumption; it makes players not "gratuitously pledge" when they have no intention to buy. Equilibria, described below, involve two types of agents: 
\begin{itemize}[leftmargin=*]
\item[-] agents in some coalition $S$ coordinate their behaviors by pledging on the same day a total amount sufficient to buy the group licence. 
\item[-] agents outside of $S$ "freeride", by waiting (i.e. pledging 0 and renting) to get the group license for free\footnote{we can eliminate freeriding by slightly modifying the rules of the game, so that only pledging agents get the group license. With this change no agent faces a free day anymore.}. Formally: 
\end{itemize} 

\begin{theorem} The following are true: 

(a). All competitive ratio equilibria of \emph{rational algorithms with (perfect) others' predictors and pessimal/no self-predictors} are as follows: on some day $r\leq 2B-1$ agents in some subset $S\subseteq [n]$ pledge amounts $(w_i)_{i\in S}$, $\sum_{i\in S} w_i=B$, such that $1\leq w_i$ and for all $i\in S$, 
\begin{equation}
(w_i-1)\cdot min(B,r-1+w_i)\leq r(B-1)\mbox{ if }r\leq B,\mbox{ and }
\end{equation}  
\begin{equation} 
r-1+w_i\leq 2B-1\mbox{ if }B+1\leq r\leq 2B-1.
\end{equation}   
Agents don't pledge at any other times.   

(b). In the equilibria described at point (c) the competitive ratio of agent $i\in S$ is 
\[
1+\frac{w_i-1}{r}\mbox{ if }r\leq B,
\frac{r+w_i-1}{B}\mbox{ if }B+1\leq r\leq 2r-1.
\]
  The competitive ratio of an agent $i\not\in S$ is 1 if $r\leq B+1$, $\frac{r-1}{B}$ if $B+1\leq r\leq 2B-1$. 
\label{thm:ptb} 
\end{theorem}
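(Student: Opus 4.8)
The plan is to reduce each agent's best‑response computation to Ski‑Rental with Known Varying Prices and then invoke Proposition~\ref{thm-full-free1}. Fix an agent $i$. Since its others'-predictor is exact, $i$ knows for each day $k$ the amount $o_k=\sum_{j\neq i}f_j(k)$ pledged by the rest, so completing the license on day $k$ costs it $p_k:=\max(B-o_k,0)$ while renting costs $1$; as $i$ has no (or a pessimal) self‑predictor, a best response must minimise its worst‑case competitive ratio over active times, i.e.\ be an optimal competitive algorithm for the varying‑prices instance $p=(p_k)_k$, and by rationality on each day it either ``buys on day $k$'' (pledges $p_k$) or rents. A routine computation shows that the deterministic algorithm that buys on day $t$ (renting until then) has worst‑case ratio $P_t/\min(t,M_{*}(p))$, and Proposition~\ref{thm-full-free1} then gives $c_{OPT}(p)=\min_t\bigl(P_t/\min(t,M_{*}(p))\bigr)$; hence a prescribed ``buy on day $r$'' strategy is a best response iff $P_r/\min(r,M_{*}(p))\le P_t/\min(t,M_{*}(p))$ for every day $t$. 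In particular ``phantom'' deviations (cheap for large active times but catastrophic at active time~$1$) need no separate analysis.

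\emph{Step 1: the shape of an equilibrium.} The license must be bought on some day, since otherwise every agent rents forever with unbounded ratio, whereas a rational agent deviating to complete the license by itself on day~$1$ pays at most $B$ and so has ratio $\le B$. Let $r$ be the first day the license is bought and $S\neq\emptyset$ the set of agents pledging a positive amount on day~$r$. For any $j\in S$, rationality plus exact prediction force $f_j(r)=B-\sum_{k\neq j}f_k(r)$, which rearranges to $\sum_k f_k(r)=B$; writing $w_j:=f_j(r)\ge 1$ for $j\in S$ yields $\sum_{j\in S}w_j=B$. No agent pledges a positive amount before day~$r$ (such a rational pledge would bring the total to $B$, contradicting minimality of $r$) nor after it (the license is already owned). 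Hence every equilibrium has the announced form, and it remains to decide, for each tuple $(r,S,(w_i)_{i\in S})$, when it is an equilibrium.

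\emph{Step 2: conditions and ratios for $i\in S$.} Here the instance is $p_k=B$ for $k\neq r$ and $p_r=w_i$, so $M_{*}(p)=\min(B,\,r-1+w_i)\ge r$, and $i$'s prescribed strategy is ``buy on day~$r$'', whose worst‑case ratio $P_r/\min(r,M_{*}(p))$ equals $\tfrac{r-1+w_i}{r}=1+\tfrac{w_i-1}{r}$ when $r\le B$ and $\tfrac{r-1+w_i}{B}$ when $r\ge B+1$ — these are the ratios in part~(b). Evaluating $\min_{t\neq r}\bigl(P_t/\min(t,M_{*}(p))\bigr)$ (note $P_t=t-1+B$ for $t\neq r$) and imposing $P_r/\min(r,M_{*}(p))\le P_t/\min(t,M_{*}(p))$ for all days $t$ reduces, after simplification, to $(w_i-1)\min(B,\,r-1+w_i)\le r(B-1)$ when $r\le B$, and to $r-1+w_i\le 2B-1$ when $B+1\le r\le 2B-1$. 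Finally $r\ge 2B$ is impossible: then $P_r=r-1+w_i>2B-1$, whereas buying alone on day~$B$ (a rational option, since the others do not pledge before day~$r$) has ratio $(2B-1)/B<P_r/B$, so $i$ would deviate; hence $r\le 2B-1$.

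\emph{Step 3: conditions and ratios for $i\notin S$, and the main obstacle.} For $i\notin S$ the others pledge $B$ on day~$r$, so the instance is $p_k=B$ for $k<r$ and $p_k=0$ for $k\ge r$; day~$r$ is a free day and $i$'s prescribed strategy (pledge $0$ always) is ``wait for day~$r$''. If $r\le B+1$ the free day sits at $M_{*}(p)+1$, so by Proposition~\ref{thm-full-free1}(a) this is $1$-competitive, hence optimal; if $B+2\le r\le 2B-1$ then $M_{*}(p)=B$, a short computation gives $c_{OPT}(p)=(r-1)/B$, attained by waiting for day~$r$. So $i\notin S$ always best responds (given $r\le 2B-1$), with ratio $1$ for $r\le B+1$ and $(r-1)/B$ for $B+1\le r\le 2B-1$, the two agreeing at $r=B+1$; together with Steps~1--2 this proves both parts of the theorem. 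The technical heart is the case analysis in Step~2 — correctly identifying $M_{*}(p)$, the minimiser over the days $t\neq r$, and the boundary behaviour at $w_i=1$ (where day~$r$ becomes a bargain day) and at $r=B$ and $r=1$ — so that the several regimes collapse to the single clean inequality of part~(a).
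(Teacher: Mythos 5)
Your proposal is correct and takes essentially the same route as the paper: reduce each agent's best response to Ski-Rental with Known Varying Prices via the exact others'-predictor, apply Proposition~\ref{thm-full-free1} (packaged as $c_{OPT}(p)=\min_t P_t/\min(t,M_{*}(p))$), and case-split on $r\le B$ versus $B+1\le r\le 2B-1$ for members of $S$ and on the free-day position for free-riders. The only blemish is the parenthetical claim $M_{*}(p)=\min(B,r-1+w_i)\ge r$, which holds only when $r\le B$; since you immediately split on that case anyway, nothing breaks.
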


\begin{proof} 
See the Appendix. 
\end{proof} 

\section{Equilibria with Predictions: Decision-Theoretic  Perspective and Definition}
\label{eq:predictions}
Suppose $n$ agents play a given game. Each agent is endowed with some beliefs(predictions), possibly probabilistic,  on the behavior of the other players. Decision theory (see e.g. \cite{bonanno2017decision}) has considered several possible courses of action for an individual agent, including: 
\begin{itemize}[leftmargin=*]  
\item[-] \textbf{MaxiMin:} The agent chooses an action maximizing its payoff, \textit{assuming worst-case(adversarial) behavior of other agents}.
\item[-] \textbf{MaxiMax:} the agent chooses a course of action maximizing its payoff, \textit{assuming best-case behavior of other agents}. 
\item[-] \textbf{Best Response:} The agent chooses the action that maximizes its expected utility, given its beliefs about the actions of other players. When all players employ this strategy \textit{and are able to perfectly predict their opponents' behavior}, the resulting concept is the well-known Nash Equilibrium. 
\end{itemize}  
The first and third courses of action listed above represent two extremes in acting upon one's beliefs: in one case (MaxiMin) one completely disregards them and "act safely". In the other case, the appropriateness of the (best-response) action of a given agent is contingent on the exactness of its  prediction on the behavior of its opponents. In reality, agents may want to play strategies other than expected maximization - this is, for instance, the case if agents are \textit{risk-seeking}. Nash equilibria with different optimization strategies have been investigated (several relevant references include \cite{fiat2010players,slumbers2023game,mavronicolas2015minimizing,mavronicolas2020conditional}). However, sometimes one may want to combine several courses of action: for instance, the Hurwicz index of pessimism (e.g. \cite{bonanno2017decision}, Section 6.4) is defined as a weighted linear combination of MaxiMin and MaxiMax.  

A risk-reward framework for the analysis of online algorithms was proposed in \cite{al1999risk} (see also \cite{ding2005risk, ma2010optimal,ni2009online} for subsequent work using it), and rediscovered in an isomorphic formulation in \cite{lykouris2018competitive, purohit2018improving}, in the context of measuring the performance of online algorithms with predictions. Specifically, given a predictor $\widehat{T}$ for the future, the performance of an online algorithm is measured not by one, but by a pair of real numbers: \textit{consistency},  the competitive ratio achieved in the case where the predictor is perfect, and \textit{robustness}, the competitive ratio in the case where the predictor is pessimal.  Unlike the Hurwicz index of pessimism, in \cite{purohit2018improving,gollapudi2019online,angelopoulos2020online,wei2020optimal} consistency/robustness are \textbf{not} aggregated into a single performance measure. Instead, one computes the Pareto frontier of the two measures, and/or optimizes consistency for a desired degree of robustness. 

\textbf{We can employ a similar approach to define equilibria with predictions.} Specifically, we will keep from Nash equilibria (that count competitive ratio equilibria as a special case) the crucial assumption that \textit{agents are able to perfectly anticipate the aggregate effect of the actions of the other players}. However, instead of assuming (as in competitive ratio equilibria) that agents lack a predictor for their \textbf{own} future behavior, and optimize by minimizing their competitive ratio, we will specify acceptable degrees of robustness/consistency for the strategies employed by each player: 

\begin{definition} Suppose $\lambda_1,\ldots, \lambda_n, \mu_1,\ldots, \mu_n\in [1,\infty)\cup \{\infty\}$. 
A strategy profile $W=(W_1,W_2,\ldots, W_n)$ is an \emph{equilibrium with predictions with parameters $(\lambda_1,\ldots, \lambda_n, \mu_1,\ldots, \mu_n)$} iff: 
\begin{itemize}[leftmargin=*]
\item[(i). ] For every set of individual predictions $\widehat{T_1},\ldots, \widehat{T_n}$ and all $1\leq i\leq n$,  following 
$W_i$ is $\lambda_i$-competitive on input $W$. 
 \item[(ii). ] Every $W_k$ is $\mu_k$-consistent. That is, for all actual active times $(T_1,T_2,\ldots, T_n)$ we have $c_{W_{k}}(W,\widehat{T_k}=T_k)\leq \mu_k.$ 
\end{itemize} 
We speak of a \textbf{strict equilibrium} when, additionally, the following condition holds: 
\begin{itemize} 
 \item[(iii). ] Every $W_i$ is Pareto optimal on input $W$ relative to their robustness/consistency pair.  
\end{itemize} 
\end{definition} 

\begin{example}
Predictionless equilibria in the previous section are examples of equilibria with predictions: if $W=(W_1,W_2,\ldots, W_n)$ is an equilibrium without predictions and $W_i$ is $\lambda_i$-competitive on input $W$ then $W$ is an equilibrium with predictions with parameters $(\lambda_1,\lambda_1,\lambda_2,\lambda_2,\ldots, \lambda_n,\lambda_n)$. Note that generally in equilibria with predictions $\mu_k< \lambda_k$, since a correct predictor helps.  
\label{one}
\end{example}

\begin{example} Another example of an equilibrium with predictions is the following: agents get a self-predictor, which they follow blindly.  As noted in \cite{anon2025}, such an algorithm is 1-consistent but its robustness is $\infty$. When every agent uses this algorithm we get an equilibrium with predictions with parameters $(\infty,1,\infty,1, \ldots, \infty,1)$ 
\label{two}
\end{example} 

 \section{Equilibria with (Self) Predictions: Properties} 
\label{sec:ec:predictions}

It is possible in principle to use characterizations of approximate best-responses in \cite{anon2025} to completely describe equilibria with predictions for simple rational programs. The resulting conditions are, however, somewhat cumbersome. Instead, we show the following result, that shows that equilibria with predictions have properties reminiscent of their predictionless counterparts, characterized in Theorem~\ref{thm:ptb}:  

\begin{theorem} 
(a). In every program equilibrium with predictions with parameters $(\lambda_1,\ldots, \lambda_n, \mu_1,\ldots, \mu_n)$ such that $min(\lambda_i)<\infty$ the license gets eventually bought irrespective of the predictions $\widehat{T_1},\ldots, \widehat{T_n}$ the $n$ agents receive. 
That is, there exists $T_0$ such that in any run such that  $T_i>T_0$ holds for at least an agent $i$, the license gets bought no later than day $T_0$. 

(b). In every run of a program equilibrium with predictions with parameters $(\lambda_1,\ldots, \lambda_n, \mu_1,\ldots, \mu_n)$ such that each $W_i$ is a simple, rational online algorithm with others' predictions the following is true: on some day $r$ agents in some subset $S\subseteq [n]$ pledge amounts $(w_i)_{i\in S}$, $w_i\geq 1$, $\sum_{i\in S} w_i\geq B$, such that, for all $k\in S$ 
\[
\frac{w_k-1}{r}\leq \lambda_{k} \frac{B-1}{M_{*}(p)}+\lambda_k+\frac{1}{\lambda_k}-2\mbox{, if }r\leq M_{*}(p), 
\]
\[
r+w_k-1\leq \lambda_k (2B-1)+(\lambda_k+\frac{1}{\lambda_k}-2) \cdot B\mbox{, if }r>M_{*}(p).
\]
 Agents never pledge amounts $>0$ before day $r$. 

(c). The competitive ratio of agent $k\in S$ on the run described at point (b). is $1+min(\frac{B-1}{M_{*}(p)},\frac{w_k-1}{r})$ if $r\leq M_{*}(p)$, and $\frac{min(M_{*}(p)+B-1, r+w_k-1)}{M_{*}(p)}\mbox{ if }r> M_{*}(p)
$. 
\label{thm:ptb2} 
\end{theorem}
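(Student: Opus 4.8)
The plan is to parallel the proof of Theorem~\ref{thm:ptb}, with the exact best-response characterization (Proposition~\ref{thm-full-free1}) replaced by the characterization from \cite{anon2025} of the best simple rational algorithms for Ski Rental with Known Varying Prices subject to a prescribed robustness level, and adopting the single-agent viewpoint of the proof idea of Theorem~\ref{thm:predictionless}: once the programs of the other $n-1$ agents are fixed, agent $k$ faces a varying-price instance $p^k$ with $p^k_j=\max(B-\sum_{\ell\neq k}f_\ell(j),0)$, and the robustness/consistency of $W_k$ on input $W$ coincide with those of $W_k$ on $p^k$.

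\textbf{Part (a).} Never buying forces an unbounded competitive ratio for any agent with a finite robustness parameter. Fix $i^{*}$ with $\lambda_{i^{*}}=\min_i\lambda_i<\infty$. On every run $OPT_{i^{*}}\le B$, since $i^{*}$ can always pledge $B$ alone on day $1$; hence, by condition~(i) of the equilibrium (which holds for all predictions), if the license is not bought by day $t$ while $i^{*}$ is still active, then $t\le cost_{i^{*}}\le\lambda_{i^{*}}B$ (plus the additive constant, if one is permitted). Taking $T_0$ to be the largest of the thresholds $\lceil\lambda_iB\rceil$ over the agents with $\lambda_i<\infty$ yields the claim (the agent $i$ in the statement being one of these; an agent with $\lambda_i=\infty$ imposes no constraint). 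This uses only $i^{*}$'s own cost together with the universal bound $OPT_{i^{*}}\le B$, so the circular ``run as a fixed point of the predictions'' causes no difficulty.

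\textbf{Parts (b) and (c).} By (a) the license is bought; let $r$ be the first day on which the aggregate daily pledge reaches $B$ and $S$ the set of agents pledging positively on day $r$; integrality and rationality give $w_k\ge 1$ for $k\in S$ and $\sum_{k\in S}w_k\ge B$. No positive pledge occurs before day $r$: in a program equilibrium each agent can determine the others' behavior, so its others'-predictor is exact on the realized run, and a rational agent pledging a positive amount does so in a quantity completing the purchase, forcing that day to be $r$ (were inconsequential pledges nevertheless allowed, one would first pass to the cost-equivalent canonical run with none). Hence $p^k_j=B$ for $j<r$, so $M_{*}(p^k)$ is a simple function of $B$ and $r$, and rationality rules out day $r$ being a free or bargain day for $k\in S$. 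Fixing $k\in S$: agent $k$ is active on day $r$, rents on days $1,\dots,r-1$ and pledges $w_k$ on day $r$, so $cost_k=r-1+w_k$. Feeding this into the \cite{anon2025} description of a $\lambda_k$-robust simple rational algorithm on $p^k$ — which by simplicity has only the two regimes ``$\widehat T<M_{*}(p^k)$'' and ``$\widehat T\ge M_{*}(p^k)$'', and whose robustness level $\lambda_k$ imposes both a lower and an upper bound on the pledge day in each regime — and distinguishing $r\le M_{*}(p^k)$ from $r>M_{*}(p^k)$, produces exactly the two displayed inequalities after substituting the value of $M_{*}(p^k)$. For (c) one then divides $cost_k=r-1+w_k$ by $OPT_k=\min(T_k,M_{*}(p^k))$, the two regimes of the simple algorithm accounting for the two terms of the minimum.

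\textbf{Main obstacle.} The delicate part is the quantitative bookkeeping in (b): the term $\lambda_k+\tfrac1{\lambda_k}-2$ is not visible from the crude estimate $cost_k\le\lambda_k\,OPT_k$ but encodes the exact robustness/consistency trade-off of the optimal simple rational algorithm, so one must quote the precise statement from \cite{anon2025} and verify that, after the translation through $p^k_j=\max(B-\sum_{\ell\neq k}f_\ell(j),0)$ and the identification of $M_{*}(p^k)$ on the equilibrium run, the pledge-day bounds it yields collapse to the claimed closed forms. A useful sanity check is that at $\lambda_k=1$ the expression vanishes and the inequalities of (b)–(c) reduce to those of Theorem~\ref{thm:ptb}. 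The remaining care is in the handling of inconsequential pledges and the case analysis around free/bargain days for agents inside and outside $S$.
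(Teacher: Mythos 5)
Your overall strategy coincides with the paper's: reduce to the single-agent Ski Rental with Known Varying Prices instance $p^k$ induced by the others' pledges, prove (a) by contradiction from finite robustness, and obtain (b)--(c) by analyzing the two-level price sequence $P_r=r-1+w_k$, $P_j=j-1+B$ for $j\neq r$. Your part (a) is fine and even slightly more elementary than the paper's (you bound the purchase day via $OPT\leq B$, the paper via $M_{*}(p)(\frac{1}{\lambda_i}-1)+M_{*}(p)\lambda_i\frac{P_{r_1}}{r_1}$); both arguments rest on the same point, that a finite robustness parameter forces the agent to complete the purchase alone by a bounded day.

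In (b), however, you stop exactly at the step that carries the content of the theorem. The two displayed inequalities are not a black-box quotation from \cite{anon2025}: the paper derives them directly by (i) computing $c_{OPT}(p^k)=1+\min(\frac{B-1}{M_{*}(p)},\frac{w_k-1}{r})$ when $r\leq M_{*}(p)$ (resp.\ $\frac{\min(M_{*}(p)+B-1,\,r+w_k-1)}{M_{*}(p)}$ when $r>M_{*}(p)$), (ii) writing the robustness requirement for pledging on day $r$ in the form $\frac{w_k-1}{r}\leq \lambda_k\, c_{OPT}(p^k)+\frac{1}{\lambda_k}-2$, and (iii) noting that the branch of the $\min$ equal to $\frac{w_k-1}{r}$ (resp.\ $r+w_k-1$) makes the inequality automatic because $\lambda_k\geq 1$ and $\lambda_k+\frac{1}{\lambda_k}\geq 2$, so only the other branch survives and yields the stated bounds. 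You correctly flag this as the main obstacle but do not carry it out, so (b) remains asserted rather than proved. Separately, your route to (c) produces the wrong formula: dividing $cost_k=r-1+w_k$ by $OPT_k=\min(T_k,M_{*}(p^k))$ gives worst case $1+\frac{w_k-1}{r}$ for $r\leq M_{*}(p)$, which lacks the $\frac{B-1}{M_{*}(p)}$ term in the theorem's $1+\min(\frac{B-1}{M_{*}(p)},\frac{w_k-1}{r})$. In the paper, point (c) is obtained as $c_{OPT}(p^k)=\frac{P_{r_1}}{r_1}$, the ratio at the optimal pledge day $r_1$ of the induced instance, which need not equal $r$ once $\lambda_k>1$; your $\lambda_k=1$ sanity check succeeds only because in that case the two quantities coincide.
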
 
\begin{proof} 
Reminiscent of that of Theorem~\ref{thm:ptb}, see the Appendix for details. 
\end{proof}

\section{Using algorithms with predictions in competitive ratio equilibria}
\label{sec:5}

In this section we investigate, first theoretically, then experimentally the following scenario: suppose $n$ agents interact by playing one competitive online equilibrium \textit{without predictions}, e.g. one of the equilibria with no self-predictions characterized in Theorem~\ref{thm:ptb}. 

Suppose now that one of the agent deviates from this equilibrium behavior by employing a self-predictor. Crucially, the other agent \textit{don't know that the agent deviates}, and continue playing their equilibrium strategies. How much can the deviating agent improve its competitive ratio by involving the predictor, in the case that the predictor is correct, without compromising too much its competitive ratio when the predictor is pessimal? 

Results from \cite{anon2025} provide a natural setting to test this scenario: an algorithm was developed with a tunable parameter $\lambda\in [0,1]$ that "interpolates" between the optimal algorithm with no self-prediction and the algorithm that blindly follows the predictor. The algorithm was inspired by the single-agent ski-rental with prediction problem studied in \cite{anon2025}, and is presented here as Algorithm~\ref{zeroprime-alg}\footnote{Compared to \cite{anon2025}, we have simplified the presentation of the algorithm by omitting from its specification the behavior for freeriding agents.}. 

\begin{algorithm}
   
    \textbf{Input: $(p,T)$}\\
       \textbf{Predictor: $\widehat{T}\geq 1$, an integer.}\\
       \vspace{-5mm}
       \begin{algorithmic}
    \State {choose $r_2=r_{2}(p,\lambda)$ to be the first day $\geq \lceil (1-\lambda) (r_0-1)+ \lambda r_1\rceil$ such that $P_{r_2}- \lambda OPT_{r_2}=min(P_{t}-\lambda OPT_{t}:t\geq \lceil (1-\lambda) (r_0-1)+ \lambda r_1\rceil ) $;\\
     if $P_{r_2}>P_{r_1}$ let $r_2=r_1$}
    \State {choose $r_{3}=r_{3}(\lambda,p)$ so that $\frac{P_{r_{3}}}{OPT_{r_{3}}(p)}\leq \lambda-1+\frac{1}{\lambda}c_{OPT}(p)$  minimizing the ratio $\frac{P_{r_{3}}}{r_{3}}$}
        \If{$(\widehat{T}\geq M_{\widehat{T}}(p))\;$}
      \State{ 
          buy on day $r_2$ (possibly not buying at all if $r_2 >T$)}
       \Else
       \State{
        buy on day $r_3$ (possibly not buying at all if $r_3>T$)}
      \EndIf
      \end{algorithmic} 
    \caption{An algorithm displaying a robustness-consistency tradeoff.}
    \label{zeroprime-alg}
\end{algorithm}

The case $\lambda= 1$ corresponds to the best-response play in equilibria without self-predictions. So one could implement equilibria in Theorem~\ref{thm:ptb} by having all non-freeriding agents $i$ play Algorithm~\ref{zeroprime-alg} with $\lambda_{i} = 1$ (freeriding agents already plays the optimal strategy). Assume, therefore, that \emph{some agent $k\in S$ actually does take into account the predictor, using Algorithm~\ref{zeroprime-alg} with parameter $\lambda_k\in (0,1)$}, instead of simply best-responding with $\lambda_k=1$. Crucially, assume that \textbf{this behavior is, however, invisible to all other agents, who still believe that $\lambda_k=1$.}

\subsection{Theoretical results}

The following result gives bounds on the consistency/robustness of Algorithm~\ref{zeroprime-alg} when playing an equilibrium: 

\begin{theorem} Assume $k\in S$. Then, using Algorithm~\ref{zeroprime-alg} with  $\lambda_k\in (0,1]$ in the competitive ratio equilibria from Theorem~\ref{thm:ptb} is $\frac{1}{\lambda_k}$-robust and $\beta_{k,\lambda}$-consistent, where $\beta_{k,\lambda}$ is given in Table~\ref{sample-table}. 
\label{eq-one-prediction} 
\end{theorem}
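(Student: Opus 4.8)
The plan is to analyze Algorithm~\ref{zeroprime-alg} as it would behave for an agent $k\in S$ who is participating in one of the equilibria of Theorem~\ref{thm:ptb}. The key observation is that, from agent $k$'s perspective, fixing the strategies of all other agents reduces the situation to a single-agent Ski-Rental with Known Varying Prices instance $p^k$, where $p^k_j = \max(B - \sum_{\ell\neq k} f_\ell(j), 0)$. Since the other agents play their equilibrium strategy with the (implicit) belief that $\lambda_k = 1$, the coalition $S$ pledges on day $r$ exactly the amounts $(w_\ell)_{\ell\in S\setminus\{k\}}$ with $\sum_{\ell\in S\setminus\{k\}} w_\ell = B - w_k$, and nobody pledges before day $r$. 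Hence agent $k$'s induced price sequence has $p^k_j = B$ for $j < r$ and $p^k_r = \max(B - (B - w_k), 0) = w_k$; from day $r$ onward (once the license is bought in the original equilibrium, which requires $k$ to pledge $w_k$) the relevant quantity is $P_r = r - 1 + w_k$ and $M_*(p^k) = \min(B, P_r)$ — this matches the split $r\leq B$ vs.\ $r\geq B+1$ and the value $\mathrm{OPT}_r = \min(r, M_*(p^k))$ that already appears in Theorem~\ref{foo}.

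Next I would track the two candidate buy-days that Algorithm~\ref{zeroprime-alg} computes, $r_2$ (used when $\widehat{T}\geq M_{\widehat{T}}(p)$) and $r_3$ (used otherwise), specialized to this $p^k$. For the robustness bound, I would consider an arbitrary (possibly pessimal) predictor $\widehat{T}$ and show that whichever branch is taken, the cost incurred is at most $\frac{1}{\lambda_k}\,\mathrm{cost}(\mathrm{OPT})$. This is exactly the single-agent robustness guarantee of the $\lambda$-interpolating algorithm from \cite{anon2025}: the day $r_3$ is defined precisely so that $\frac{P_{r_3}}{\mathrm{OPT}_{r_3}(p)}\leq \lambda - 1 + \frac{1}{\lambda} c_{\mathrm{OPT}}(p) \leq \frac{1}{\lambda}$ (using $\lambda - 1 \leq 0$ and $c_{\mathrm{OPT}}(p)\leq 1$ is false, so instead one uses $c_{\mathrm{OPT}}(p)\geq 1$ together with $\lambda\leq 1$ to get $\lambda - 1 + \frac{1}{\lambda}c_{\mathrm{OPT}}(p)\leq \frac{1}{\lambda}c_{\mathrm{OPT}}(p)\leq \frac{1}{\lambda}\cdot$ something — I would be careful here), and the day $r_2$ is never worse than $r_1$, an optimal day, by construction; I would invoke the corresponding lemma from \cite{anon2025} rather than re-derive it. For the consistency bound, when $\widehat{T} = T$ the branch taken is the ``correct'' one: if $T\geq M_T(p^k)$ the algorithm buys on $r_2$, which is close to the optimal day $r_0$ but shifted by the $\lambda$-interpolation, and if $T < M_T(p^k)$ it rents to the end (the license being never worth buying), achieving ratio $1$. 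The value $\beta_{k,\lambda}$ in Table~\ref{sample-table} should then come out as the competitive ratio of buying on the interpolated day $r_2 = \lceil (1-\lambda_k)(r_0 - 1) + \lambda_k r_1\rceil$ relative to $\mathrm{OPT}_T$, which I would compute by plugging the equilibrium values of $r$, $w_k$, $M_*(p^k)$ into the two regimes $r\leq B$ and $r > B$, and this is where the explicit case analysis producing the table entries lives.

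The main obstacle I expect is the consistency computation in the regime $r \leq B$ (and near the boundary $r = B$): here $r_0$, $r_1$ and $r_2$ can all be distinct, the ceiling in the definition of $r_2$ can bump the buy-day up by one, and the optimal offline cost $\mathrm{OPT}_T$ depends delicately on whether $T$ falls before or after $M_*(p^k) = \min(B, r-1+w_k)$. Getting the tight constant $\beta_{k,\lambda}$ rather than a loose bound requires identifying, for each equilibrium configuration $(r, w_k)$ and each $\lambda_k$, the worst-case true active time $T$ — i.e.\ the $T$ maximizing $\frac{\mathrm{cost}(\text{buy on }r_2)}{\mathrm{OPT}_T}$ subject to $\widehat{T} = T\geq M_T(p^k)$. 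I would handle this by first showing the worst case is attained at $T = M_*(p^k)$ (the classical break-even argument: beyond the break-even day the ratio only improves since $\mathrm{OPT}$ saturates), then evaluating the ratio there, and finally simplifying using the equilibrium inequalities of Theorem~\ref{thm:ptb} that constrain $w_k$ in terms of $r$ and $B$. The robustness half is comparatively routine once the single-agent lemma from \cite{anon2025} is correctly transported to the induced instance $p^k$, the only subtlety being to check that the $M_{\widehat{T}}(p)$ test inside the algorithm is evaluated on the induced price sequence and not on some global object, so that a pessimal $\widehat{T}$ genuinely forces the $r_3$-branch and the $r_3$-guarantee applies.
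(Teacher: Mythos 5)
Your overall framing matches the paper's: you reduce agent $k$'s situation to a Ski-Rental with Known Varying Prices instance with $p_r=w_k$ and $p_j=B$ otherwise, import the single-agent consistency/robustness guarantees of Algorithm~\ref{zeroprime-alg} from \cite{anon2025}, and then case-split on $r=1$, $2\le r\le M_{*}(p)$, and $M_{*}(p)<r\le 2B-1$. However, there is a genuine gap in your consistency analysis. You claim that when $\widehat{T}=T$ and $T<M_{T}(p^k)$ the algorithm ``rents to the end \ldots achieving ratio $1$,'' so that $\beta_{k,\lambda}$ is determined solely by the $r_2$-branch. That is false: in the else-branch the algorithm buys on day $r_3$ whenever $r_3\le T$, and $r_3$ can satisfy $r_3\le T<M_{T}(p^k)$. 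Concretely, in the regime $2\le r\le M_{*}(p)$ with $w_k\ge 2$, the paper shows $r_3=r$ when $M_{*}(p)\bigl[\frac{w_k-1}{r\lambda}+\frac{(\lambda-1)^2}{\lambda}\bigr]\le B-1$; taking $T=\widehat{T}=r$ gives $M_{T}(p)=r-1+w_k>T$, the else-branch fires, the algorithm pays $P_r=r-1+w_k$ while $\mathrm{OPT}_T=r$, and the ratio is $1+\frac{w_k-1}{r}>1$. This is exactly why the consistency formula imported from \cite{anon2025} is $\max\bigl(\frac{P_{r_2}}{M_{*}(p)},\frac{P_{r_3}}{r_3}\bigr)$ when $r_3\le M_{*}(p)$, and why several entries of Table~\ref{sample-table} carry a $\max(\cdot,\cdot)$ or a $1+\frac{w_k-1}{r}$ term that an $r_2$-only computation cannot produce. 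The bulk of the paper's proof is precisely the determination of $r_3$ and of whether $r_3\le M_{*}(p)$ (in which case it contributes to $\beta_{k,\lambda}$) or $r_3>M_{*}(p)$ (in which case it does not); this step is entirely missing from your plan.

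A secondary point: your hedge on the robustness half is warranted but unresolved. The imported bound is $\lambda-1+\frac{1}{\lambda}c_{\mathrm{OPT}}(p)$, which on the equilibrium instance equals $\lambda-1+\frac{1}{\lambda}\bigl(1+\frac{w_k-1}{r}\bigr)$ for $r\le M_{*}(p)$ and is not literally $\frac{1}{\lambda_k}$ unless robustness is read relative to the best predictionless competitive ratio $c_{\mathrm{OPT}}(p)$ rather than to $\mathrm{OPT}$ itself; you need to fix one convention and carry it through rather than leave ``I would be careful here'' in place of the argument.
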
 
\begin{table*}[ht]
  \caption{Values of consistency parameter $\beta_{k,\lambda}$ from Th.~\ref{eq-one-prediction}:}
  \label{sample-table}
  \centering
  \begin{tabular}{lll}
    \toprule
    Case   & Conditions     & Consistency $\beta_{k,\lambda}$ \\
    \midrule
    $r=1$ & $w_k=1\mbox{ or }w_k\geq 2,  \lambda \leq \frac{w_k(w_k-1)}{r(B-1)}$  & 1     \\
    $r=1$ & $w_k\geq 2, \lambda > \frac{w_k(w_k-1)}{r(B-1)}.$ & $w_k$  \\
    $2\leq r \leq M_{*}(p)$ & $\frac{w_k-1}{r\lambda}+\frac{(\lambda-1)^2}{\lambda} \leq 1-\frac{1}{B},
\frac{B+ \lceil \lambda r\rceil -1}{r+w_k-1}<1$
 & $1+max(\frac{\lceil \lambda r\rceil -1}{B},\frac{w_k-1}{r})$\\
 $2\leq r \leq M_{*}(p)$ & $\frac{B+ \lceil \lambda r\rceil -1}{r+w_k-1}<1, \frac{w_k-1}{r\lambda}+\frac{(\lambda-1)^2}{\lambda}> 1-\frac{1}{B}$ & $1+\frac{\lceil \lambda r\rceil -1}{B}$\\
 $2\leq r \leq M_{*}(p)$ & $\frac{w_k-1}{r\lambda}+\frac{(\lambda-1)^2}{\lambda}\leq 1-\frac{1}{M_{*}(p)}, 
 r+w_k-1\leq B$ & $1+\frac{w_k -1}{r}$\\
 $2\leq r \leq M_{*}(p)$  & $M_{*}(p)[\frac{w_k-1}{r\lambda}+\frac{(\lambda-1)^2}{\lambda}]> B-1,
r+w_k-1\leq B$ & 1\\
    $M_{*}(p)<r\leq 2B-1$    & $B+ \lceil \lambda r\rceil -1 < r+w_k-1$ &  $1+\frac{w_k -1}{r}$      \\
    $M_{*}(p)<r\leq 2B-1$ & otherwise  &  $\frac{r+w_k-1}{B}$        \\
    \bottomrule
  \end{tabular}
\end{table*}

\begin{observation} 
For $\lambda \leq \frac{M_{*}(p)(w_k-1)}{r(B-1)}$ one can easily see that the consistency bound $\beta_{k,\lambda}$ improves on the robustness guarantee for algorithms without predictions (i.e. competitive ratio) $1+\frac{w_k-1}{r}$.  So an agent $k\in S$ using a parameter $\lambda_k<  \frac{M_{*}(p)(w_k-1)}{r(B-1)}$ is able to improve upon equilibrium behavior if its predictor is perfect. 
\end{observation}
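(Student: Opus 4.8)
The claim compares two numbers attached to an agent $k \in S$ in an equilibrium of Theorem~\ref{thm:ptb}: the predictionless competitive ratio $1 + \frac{w_k-1}{r}$ (the robustness floor one gets with $\lambda_k = 1$), and the consistency $\beta_{k,\lambda}$ from Table~\ref{sample-table}, which we want to show is strictly smaller once $\lambda < \frac{M_*(p)(w_k-1)}{r(B-1)}$. The plan is to go case by case through the rows of Table~\ref{sample-table}, but organized by the value of $r$ relative to $M_*(p)$, exactly as the table is. Throughout I will use the fact (from Theorem~\ref{thm:ptb}, combined with the $\lambda=1$ specialization discussed just before Theorem~\ref{eq-one-prediction}) that for $r \le B$ the equilibrium competitive ratio of agent $k$ is $1 + \frac{w_k-1}{r}$, and note that when $r \le B$ we have $M_*(p) \le r$ trivially fails in general; in fact $M_*(p)$ is the first-coordinate quantity $\min_t P_t$ and the regime split in the table is precisely $r \le M_*(p)$ versus $r > M_*(p)$, so I must be careful that ``$r \le B$'' in Theorem~\ref{thm:ptb} and ``$r \le M_*(p)$'' in Table~\ref{sample-table} are being matched correctly — this bookkeeping is the first thing to pin down.

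For the regime $2 \le r \le M_*(p)$: the relevant consistency values in the table are $1 + \max\!\left(\frac{\lceil \lambda r\rceil - 1}{B}, \frac{w_k-1}{r}\right)$, $1 + \frac{\lceil \lambda r\rceil - 1}{B}$, $1 + \frac{w_k-1}{r}$, and $1$. The last is trivially $\le 1 + \frac{w_k-1}{r}$ and strictly so when $w_k \ge 2$; the third equals the predictionless ratio, so for a \emph{strict} improvement one needs $w_k$ in the range where a better row applies. The substance is to show that under $\lambda < \frac{M_*(p)(w_k-1)}{r(B-1)}$ we get $\frac{\lceil \lambda r\rceil - 1}{B} \le \frac{w_k-1}{r}$, so that the $\max$ collapses to $\frac{w_k-1}{r}$ and the ``$1 + \frac{\lceil\lambda r\rceil-1}{B}$'' row is also $\le$ the predictionless ratio. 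From $\lambda r < \frac{M_*(p)(w_k-1)}{B-1}$ and $M_*(p) \le B$ (since $M_*(p) = \min_t P_t \le P_1 \le B$) one gets $\lambda r < \frac{B(w_k-1)}{B-1}$, hence $\lceil \lambda r\rceil \le \frac{B(w_k-1)}{B-1}$ after using integrality and $w_k \ge 1$ (the ceiling versus the strict inequality is where a small amount of care — possibly an off-by-one argument using that $\lambda r$ is not an integer, or that $B(w_k-1)/(B-1)$ exceeds $\lceil\lambda r\rceil$ by enough — is needed), and then $\frac{\lceil\lambda r\rceil - 1}{B} \le \frac{w_k-1}{B-1} - \frac{1}{B} \le \frac{w_k-1}{r}$ once $r \le M_*(p) \le B$; strictness comes from the last step being strict whenever $r < B$ or from the $-1/B$ slack. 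The regime $M_*(p) < r \le 2B-1$ is easier: there $\beta_{k,\lambda}$ is either $1 + \frac{w_k-1}{r}$ (no improvement, and indeed the condition $B + \lceil\lambda r\rceil - 1 < r + w_k - 1$ fails exactly when $\lambda$ is small, so this row does not trigger under our hypothesis) or $\frac{r+w_k-1}{B}$, and one checks $\frac{r+w_k-1}{B} \le 1 + \frac{w_k-1}{r}$ using $M_*(p) < r$ together with $M_*(p) \ge$ the relevant bound; here one should double-check the inequality direction since for $r$ close to $2B-1$ the quantity $\frac{r+w_k-1}{B}$ can be close to $2$, and compare against $1 + \frac{w_k-1}{r}$ directly. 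For $r=1$ the claim is immediate from the first two rows of the table: $\lambda \le \frac{w_k(w_k-1)}{r(B-1)}$ is implied by $\lambda < \frac{M_*(p)(w_k-1)}{r(B-1)}$ when $M_*(p) \le w_k$... which may not hold, so here I would instead just note $\beta_{k,\lambda} = 1 \le 1 + \frac{w_k-1}{r}$ in the first row and handle $w_k = 1$ separately (no improvement possible, consistent with the ``improves upon'' being non-strict there).

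The main obstacle I anticipate is not any single inequality but the matching of the two different ``$r$-thresholds'': Theorem~\ref{thm:ptb} phrases everything in terms of $r \lessgtr B$, while Table~\ref{sample-table} uses $r \lessgtr M_*(p)$, and $M_*(p) \le B$ always but can be strictly less. I would resolve this by re-deriving, from the equilibrium structure of Theorem~\ref{thm:ptb}(a) and the definition $P_i = i-1+p_i$, what $M_*(p)$ actually equals on an equilibrium run of agent $k$ (it should be governed by the pledges of the \emph{other} agents, who pledge $0$ until day $r$, giving $p_i = B$ for $i<r$ and a free day at $r$, so $M_*(p) = \min(r-1+B,\, r) = r$ when $r \le B$ — which would make the two thresholds agree in the relevant regime and make the whole comparison clean). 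Pinning down that identity $M_*(p) = \min(r, B+r-1)$ on equilibrium runs is the real content; once it is in hand, every row of Table~\ref{sample-table} reduces to a one-line algebraic comparison of the kind sketched above, with strictness tracked through the $\lceil \cdot \rceil$ and the $\frac{1}{B}$-slack terms.
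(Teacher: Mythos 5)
Your overall plan---a row-by-row comparison of $\beta_{k,\lambda}$ from Table~\ref{sample-table} against $1+\frac{w_k-1}{r}$---is the right one, and you correctly isolate $\frac{\lceil\lambda r\rceil-1}{B}\leq\frac{w_k-1}{r}$ as the substantive inequality. But two load-bearing steps are off. First, the identity you propose for $M_{*}(p)$ is computed from the wrong agent's perspective: for the pledging agent $k\in S$ the induced prices are $p_r=w_k$ and $p_j=B$ for $j\neq r$ (the view with $p_r=0$ is the freerider's), so $M_{*}(p)=\min(B,\,r-1+w_k)$, not $\min(r,\,B+r-1)=r$. The bookkeeping issue you flag is still resolved, but differently: since $w_k\geq 1$, one has $r\leq M_{*}(p)$ iff $r\leq B$, so the two regime splits coincide; and in the regime $M_{*}(p)<r$ the predictionless equilibrium competitive ratio from Theorem~\ref{thm:ptb} is $\frac{r+w_k-1}{B}$, not $1+\frac{w_k-1}{r}$ (your suspicion that $\frac{r+w_k-1}{B}\leq 1+\frac{w_k-1}{r}$ can fail for $r$ near $2B-1$ is correct---it does fail), so the Observation is really only about the case $r\leq M_{*}(p)$.

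Second, and more importantly, your argument only shows that each row's consistency is $\leq 1+\frac{w_k-1}{r}$, and for the rows with value $1+\frac{w_k-1}{r}$ and $1+\max(\frac{\lceil\lambda r\rceil-1}{B},\frac{w_k-1}{r})$ it is $=$ or $\geq$; that does not establish an improvement. The missing idea is that the hypothesis $\lambda\leq\frac{M_{*}(p)(w_k-1)}{r(B-1)}$ is exactly what \emph{selects} the favorable rows: it is equivalent to $M_{*}(p)\frac{w_k-1}{r\lambda}\geq B-1$, hence (adding the term $M_{*}(p)\frac{(\lambda-1)^2}{\lambda}$, strictly positive for $\lambda<1$) it forces $M_{*}(p)[\frac{w_k-1}{r\lambda}+\frac{(\lambda-1)^2}{\lambda}]>B-1$, which is precisely the condition of the rows with consistency $1$ and $1+\frac{\lceil\lambda r\rceil-1}{B}$ and the negation of the conditions of the two unfavorable rows. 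Only after this row selection does your ceiling estimate finish the job, and it can be made clean: $\lceil\lambda r\rceil-1<\lambda r\leq\frac{M_{*}(p)(w_k-1)}{B-1}\leq\frac{B(w_k-1)}{B-1}=w_k-1+\frac{w_k-1}{B-1}\leq w_k$, and since $\lceil\lambda r\rceil-1$ is an integer this gives $\lceil\lambda r\rceil-1\leq w_k-1$, whence $\frac{\lceil\lambda r\rceil-1}{B}\leq\frac{w_k-1}{B}\leq\frac{w_k-1}{r}$. (Note the resulting improvement is non-strict at the boundary, e.g. $r=B$ with $\lceil\lambda r\rceil=w_k$.)
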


\begin{figure*}[ht]
\centering
\begin{tabular}{|c|c|c|c|}
\hline
\subf{\includegraphics[width=.20\textwidth]
{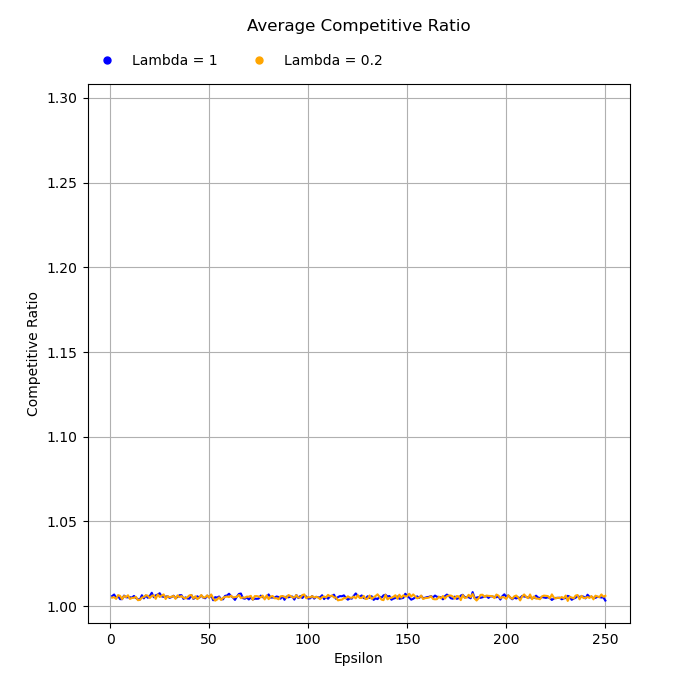}}{} 
&
\subf{\includegraphics[width=.20\textwidth]{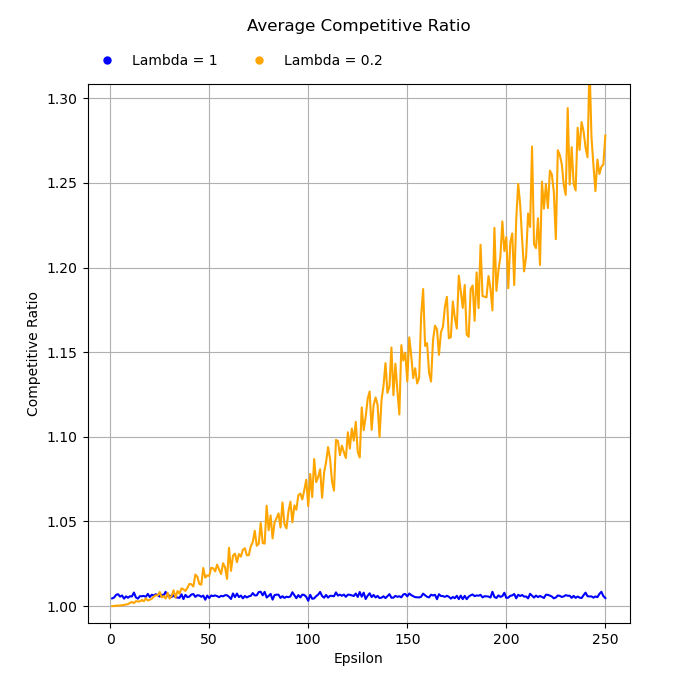}}{}
& 
\subf{\includegraphics[width=.20\textwidth]{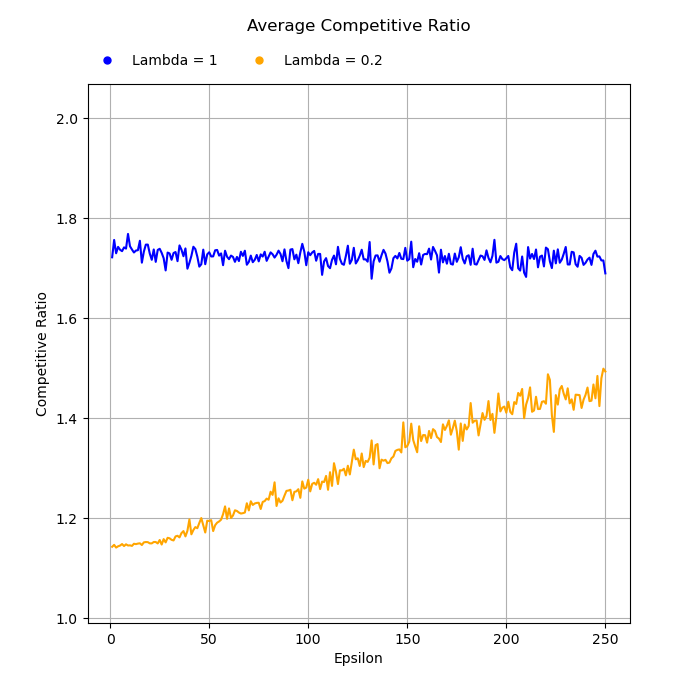}}{}
& 
\subf{\includegraphics[width=.20\textwidth]{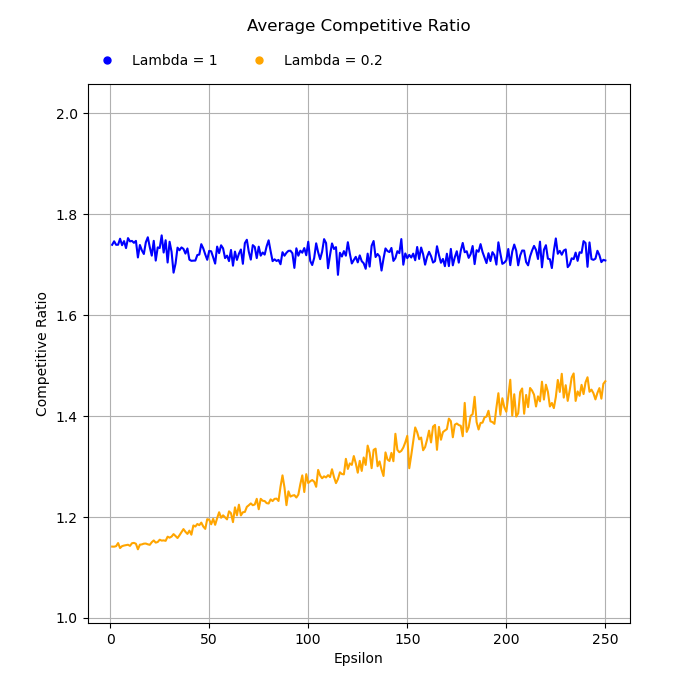}}{}
\\
\hline 
\subf{\includegraphics[width=.20\textwidth]
{Figure_7_equilibrium}
}{}
&
\subf{\includegraphics[width=.20\textwidth]{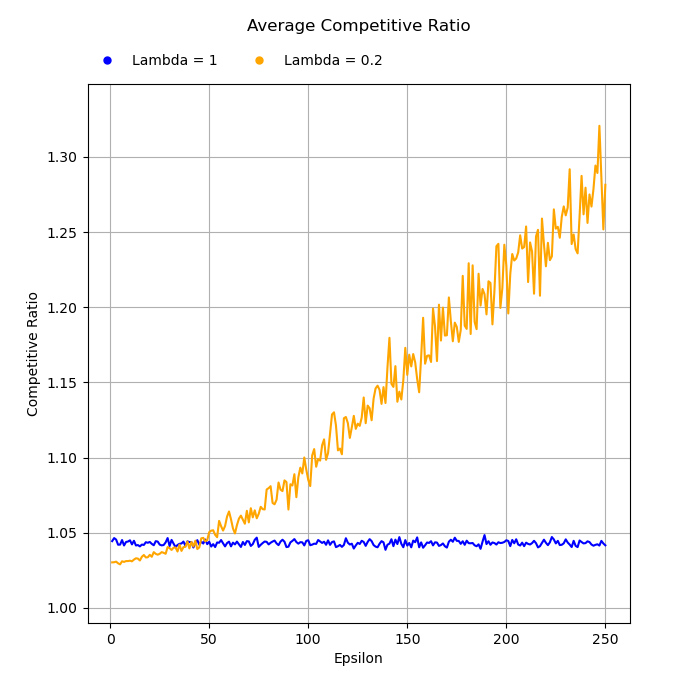}}{}
& 
\subf{\includegraphics[width=.20\textwidth]{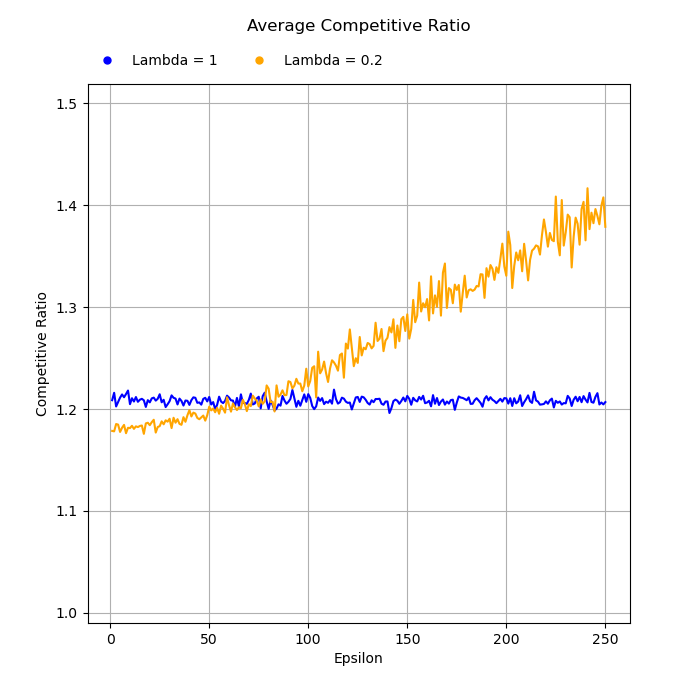}
}
     {}
& 
\subf{\includegraphics[width=.20\textwidth]{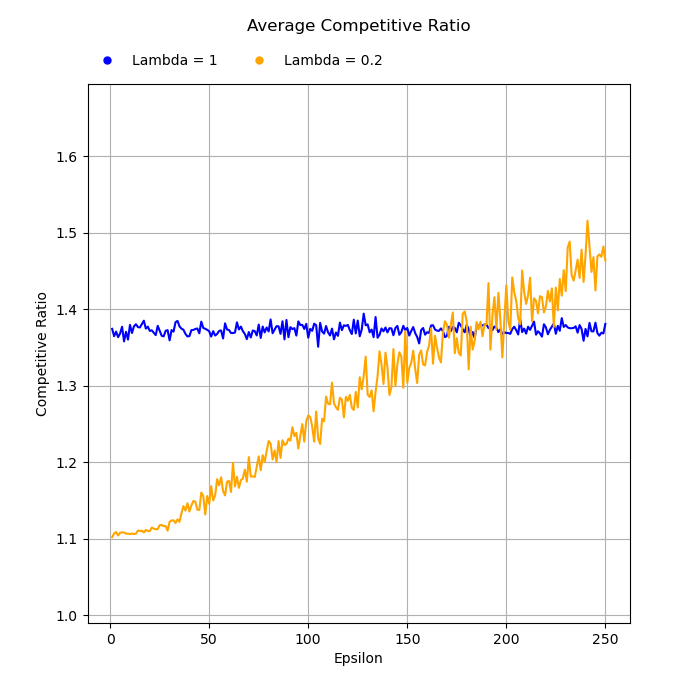}
}
     {}
\\
\hline
\end{tabular}
\label{fig-1}
\caption{Average consistency for ski-rental with predictions on  various equilibria  (a). "Trivial" equilibrium $(r=75,w=19)$. (b). Increasing $w$ by 1 makes equilibrium nontrivial: (r=75,w=20) (c,d): "Late" equilibria: (c) $(r=125, w=75)$, (d) $(r=125,w=15)$. (e-). The effect of increasing the group license acquisition date in "nontrivial" equilibria. $r=75$ (e). $w=20$ (same as (b)). (f) $w=30$, (g). $w=50$. (h). $w=70$.}
\end{figure*}

\begin{proofidea}

From the perspective of agent $k$ the problem is a ski-rental problem with costs $p_r=w_k$, $p_j=B$ for $j\neq r$. Therefore $P_r=r-1+w_k$, 
$P_j=j-1+B$ for $j\neq r$.

We use these specific prices as an input $p$ to Algorithm~\ref{zeroprime-alg}, 
and the following result from \cite{anon2025} (Theorem 6.4 in that paper), which gives a mathematical formula for the consistency of Algorithm~\ref{zeroprime-alg} on a given input $p$, given active time $T$ and prediction $\widehat{T}$: 

\begin{proposition} The competitive ratio of  algorithm~\ref{zeroprime-alg} on input $(p,T)$, prediction $\widehat{T}$,  is 
\begin{equation}
c_{A}(p,T;\widehat{T})\leq \lambda - 1+\frac{1}{\lambda}c_{OPT}(p).
\end{equation} 
On the other hand, if $T=\widehat{T}$ then 
\begin{equation}
 c_{A}(p,T;T)\leq max(\frac{P_{r_2}}{M_{*}(p)}, \frac{P_{r_{3}}}{r_{3}})\mbox{ if }r_3\leq M_{*}(p), 
\end{equation} 
\begin{equation}
 c_{A}(p,T;T)\leq \frac{P_{r_2}}{M_{*}(p)},\mbox{ otherwise. }
\end{equation}
\label{alg-predictions} 
\end{proposition}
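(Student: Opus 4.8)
The plan is to reduce everything to the elementary observation that on input $(p,T)$ an algorithm committing to buy on day $r$ pays $P_r$ if $r\le T$ and pays $T$ (renting throughout) if $r>T$, while the offline optimum is $OPT_{T}=\min(T,M_{*}(p))$; hence its competitive ratio on $(p,T)$ equals $\frac{P_r}{\min(T,M_{*}(p))}$ when $r\le T$ and $\frac{T}{\min(T,M_{*}(p))}$ otherwise. Since the algorithm buys on $r_2$ exactly when $\widehat{T}\ge M_{\widehat{T}}(p)$ and on $r_3$ otherwise, each claimed bound is obtained by maximizing these ratios over the admissible active times $T$ in the corresponding branch. First I would record two facts used throughout: $OPT_{T}$ is non-decreasing in $T$ and stabilizes at $M_{*}(p)$, and the target robustness value satisfies $\lambda-1+\tfrac1\lambda c_{OPT}(p)\ge 1$ (because $\lambda+\tfrac1\lambda\ge 2$ and $c_{OPT}(p)\ge 1$), and in fact $\ge c_{OPT}(p)$ (this last inequality factors as $(1-\lambda)\frac{c_{OPT}(p)-\lambda}{\lambda}\ge 0$). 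The latter guarantees that the day $r_1$ realizing the predictionless optimum $c_{OPT}(p)$ is itself admissible in the definition of $r_3$, so $r_3$ is well defined.

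For the robustness bound (first displayed inequality), which must hold for every $\widehat{T}$, I would argue branch by branch. In the $r_3$-branch the defining property $\frac{P_{r_3}}{OPT_{r_3}}\le\lambda-1+\tfrac1\lambda c_{OPT}(p)$ together with monotonicity of $OPT_{T}$ immediately bounds the ratio for every $T\ge r_3$; for $T<r_3$ the algorithm only rents, so its ratio is $1$, dominated by the target by the inequality recorded above. The $r_2$-branch is the technical heart: here I must show $\frac{P_{r_2}}{\min(r_2,M_{*}(p))}\le\lambda-1+\tfrac1\lambda c_{OPT}(p)$, since the worst admissible $T$ is the one minimizing $OPT_{T}$ subject to $T\ge r_2$. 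This is where the exact shape of the rule for $r_2$ is needed: the starting index $\lceil(1-\lambda)(r_0-1)+\lambda r_1\rceil$ (a convex combination of the ``trust-fully'' day $r_0$ achieving $M_{*}(p)$ and the predictionless-optimal day $r_1$) lower-bounds $r_2$, while the cap ``if $P_{r_2}>P_{r_1}$ set $r_2=r_1$'' upper-bounds $P_{r_2}$ by $P_{r_1}$; I would combine these with the minimality of $P_t-\lambda OPT_t$ at $t=r_2$ to pin $\frac{P_{r_2}}{\min(r_2,M_{*}(p))}$ against the target, treating $r_2\le M_{*}(p)$ and $r_2>M_{*}(p)$ (where the minimization forces the ratio into the form $\frac{P_{r_2}}{M_{*}(p)}$) separately.

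For the consistency bound (second and third inequalities) I would exploit $T=\widehat{T}$ to collapse each branch to a single worst case. If $\widehat{T}=T\ge M_{\widehat{T}}(p)$ the algorithm buys on $r_2$ and the large active time forces $OPT_{T}=M_{*}(p)$; after checking that $r_2\le M_{*}(p)\le T$ (or else $r_2=r_1$, handled directly) so that the purchase actually happens, the ratio is exactly $\frac{P_{r_2}}{M_{*}(p)}$. If $\widehat{T}=T<M_{\widehat{T}}(p)$ the algorithm buys on $r_3$ and $OPT_{T}=T$; when $r_3\le M_{*}(p)$ the worst such $T$ is $T=r_3$, giving $\frac{P_{r_3}}{r_3}$, whereas when $r_3>M_{*}(p)$ every admissible $T$ satisfies $T<M_{\widehat{T}}(p)\le M_{*}(p)<r_3$, so the algorithm never buys and its ratio is $1$. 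Taking the maximum over the two branches yields $\max\!\big(\frac{P_{r_2}}{M_{*}(p)},\frac{P_{r_3}}{r_3}\big)$ when $r_3\le M_{*}(p)$ and $\frac{P_{r_2}}{M_{*}(p)}$ otherwise, as claimed.

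The step I expect to be the main obstacle is the robustness bound in the $r_2$-branch: unlike $r_3$, the day $r_2$ is not defined through the competitive ratio directly but through the linearized surrogate $P_t-\lambda OPT_t$ and the convex-combination starting index, so extracting the clean bound $\lambda-1+\tfrac1\lambda c_{OPT}(p)$ requires translating a statement about this surrogate (and the cap against $P_{r_1}$) into a statement about the ratio $\frac{P_{r_2}}{\min(r_2,M_{*}(p))}$. The delicate points are verifying that the convex-combination lower bound on $r_2$ is strong enough and handling the boundary between $r_2\le M_{*}(p)$ and $r_2>M_{*}(p)$; all remaining cases reduce to the monotonicity of $OPT_{T}$ and the two numerical inequalities recorded at the outset.
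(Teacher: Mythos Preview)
The paper does not prove this proposition at all: it is quoted verbatim as ``the following result from \cite{anon2025} (Theorem 6.4 in that paper)'' and is used as a black box in the proof of Theorem~\ref{eq-one-prediction}. There is therefore no in-paper argument to compare your proposal against.

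That said, your outline is the natural one and broadly sound. One small inaccuracy worth flagging: in the $r_3$-branch of the robustness argument you write that for $T<r_3$ ``the algorithm only rents, so its ratio is $1$''. This is only literally true when $T\le M_{*}(p)$. If $r_3>M_{*}(p)$ and $M_{*}(p)<T<r_3$, the ratio is $T/M_{*}(p)>1$; however, since $T<r_3\le P_{r_3}$ and $\frac{P_{r_3}}{M_{*}(p)}=\frac{P_{r_3}}{OPT_{r_3}}\le \lambda-1+\tfrac{1}{\lambda}c_{OPT}(p)$ by the defining inequality for $r_3$, the desired bound still follows. Your identification of the $r_2$-branch robustness as the technical crux, and the need to translate the surrogate $P_t-\lambda\,OPT_t$ together with the convex-combination start index and the $P_{r_2}\le P_{r_1}$ cap into the ratio bound, is exactly right; that is where the actual work in the cited proof lies.
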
 

To apply these formula to computing the robustness/consistency bounds for Algorithm~\ref{zeroprime-alg}
we need, of course, to compute the quantities $c_{OPT}(p)$,  $M_{*}(p),r_2,r_3$ for our specific input $p$. 

This is accomplished via a case-by-case analysis, presented in detail in the Appendix. The analysis deals with the following cases: 
\begin{itemize}[leftmargin=*] 
\item[-] $r=1$, with subcases $w_k=1$ and $w_k\geq 2$ 
\item[-] $2\leq r\leq M_{*}(p)$ 
\item[-] $M_{*}(p)<r\leq 2B-1$ 
\end{itemize} 

The interesting quantity is (in several cases) $r_3$, since to compute the consistency guarantee we need to compare it to parameter $M_{*}(p)$.

\end{proofidea} 

\subsection{Experimental Results}

In \cite{anon2025} we benchmarked Algorithm~\ref{zeroprime-alg} under a model that assumed randomly generating total pledges of the other players, and investigated its average consistency/robustness. It goes without saying that this scenario does \emph{not} accurately represent equilibrium behavior: agents behave strategically, and there is no reason to suspect that their average pledges are, somehow, "random". 

In this section we subject Algorithm~\ref{zeroprime-alg} to an experimental investigation, assuming a setting similar to the one considered in Theorem~\ref{eq-one-prediction}.  Namely,  assume that agents play one of the equilibria described in Theorem~\ref{thm:ptb} and investigate the performance of an agent that deviates from equilibrium behavior by employing Algorithm~\ref{zeroprime-alg} with a value of $\lambda$ different from 1. The python code employed is the same as the one used for experiments in  \cite{anon2025}, and is already posted on github. A pointer to the code will be provided once the paper is published. 

Note that, although we benchmark the algorithm on several equilibria, \textbf{it makes no real sense to compare the performance of the algorithm between different equilibria}: the competitive ratio of the algorithm without predictions depends (Theorem~\ref{thm:ptb}) on the parameters $(w,r)$ of this equilibrium. For this reason every experiment will be performed (with several $\lambda$) on data arising from a single equilibrium.  As with the experiments in \cite{anon2025}, we aim to match the experimental setup of \cite{purohit2018improving} as well as we can. We will assume, therefore, that $B=100$, that $T\in [1,4B]$ is randomly chosen, that $\widehat{T}=T+\epsilon$, with $\epsilon$ being normally distributed with average 0 and standard deviation $\sigma$. We will choose $\lambda\in \{0.2,1\}$. Prices will correspond to an equilibrium: that is $P_{r}=r-1+w_k$ for some $r\leq 2B-1$, $P_{j}=j-1+B$ for $j\neq r$, with $w_k$ satisfying the appropriate condition in the characterization of equilibria (Theorem~\ref{thm:ptb}).

\begin{observation} When choosing a pair $(r,w)$ we need to check that it satisfies the conditions of Theorem~\ref{thm:ptb}, so that it is indeed an equilibrium. Here is a sample computation for 
the pair $(r=75, w=70)$: $r(B-1)=75*99=7425$. $(w-1)\cdot min(B,r+w-1)=69 \cdot min(100,75+70-1)=6900$. So $(w-1)\cdot min(B,r+w-1)<r(B-1)$. In this case,  the worst-case robustness theoretical guarantee is $1+\frac{w-1}{r}=1+\frac{69}{75}=1.92$. Note that the results in Figure~\ref{fig-1} (h) are (well) below this guarantee, since what we are measuring/plotting is average (rather than worst-case) competitiveness. 
\end{observation}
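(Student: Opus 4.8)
The plan is to verify this directly against the two parts of Theorem~\ref{thm:ptb}, since the observation is nothing more than a concrete instantiation of that characterization at $B=100$, $r=75$, $w_k=70$, followed by a read-off of the corresponding worst-case competitive ratio. First I would fix the price sequence that agent $k$ faces when the remaining coalition members collectively pledge $B-w_k=30$ on day $r$ and nothing otherwise, namely $P_r=r-1+w_k$ and $P_j=j-1+B$ for $j\neq r$, exactly as stipulated in the experimental setup. This realizes a coalition $S$ with $\sum_{i\in S}w_i=B$, so the only thing left to confirm is that agent $k$'s own pledge $w_k=70$ meets the per-agent equilibrium constraint.

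The key step is selecting the correct branch of Theorem~\ref{thm:ptb}(a). Since $r=75\leq B=100$, the relevant inequality is $(w_k-1)\cdot\min(B,\,r-1+w_k)\leq r(B-1)$. I would evaluate the inner minimum first: $r-1+w_k=144>B=100$, so $\min(B,\,r-1+w_k)=B=100$. Substituting gives $(w_k-1)\cdot B=69\cdot 100=6900$ on the left and $r(B-1)=75\cdot 99=7425$ on the right, and since $6900<7425$ the constraint holds, confirming that $(r=75,w=70)$ is an admissible equilibrium parameter pair. The same branch selection ($r\leq B$) then applies to Theorem~\ref{thm:ptb}(b), whose first case yields the worst-case (predictionless) competitive ratio $1+\frac{w_k-1}{r}=1+\frac{69}{75}=1.92$.

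The final clause concerning Figure~\ref{fig-1}(h) is not an analytic assertion but an empirical one: the plotted quantity is the \emph{average} competitive ratio over randomly drawn $T$ and prediction error, and any such average is trivially bounded by the worst-case value $1.92$. That the measured average lies \emph{well} below this bound is read off from the figure and reflects the fact that typical sampled instances are far from the adversarial worst case; I would simply state this rather than attempt to derive the gap, which would require a distributional analysis of the $(T,\widehat{T})$ sampling and is outside the scope of a sanity check. The only point demanding care — and the sole place where a mechanical slip could occur — is the evaluation of the nested $\min$ together with the branch choice $r\leq B$ versus $B+1\leq r\leq 2B-1$; once $r=75\leq 100$ is noted, both Theorem~\ref{thm:ptb}(a) and (b) route to their first cases and the remaining arithmetic is immediate.
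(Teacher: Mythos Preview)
Your proposal is correct and mirrors the paper's own treatment: the observation is itself a self-contained arithmetic check, and you reproduce exactly that computation while making the branch selection ($r=75\leq B=100$, hence the first case of both Theorem~\ref{thm:ptb}(a) and (b)) explicit. Nothing further is required.
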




\subsection{"Trivial" equilibria} 

It turns out that for some combinations of values of the parameters $(r,w)$ something remarkable happens: the resulting equilibria are "trivial". An example is displayed in  Figure~\ref{fig-1} (a): $r=75$, $w_k=19$. This is an equilibrium: $r-w_k+1=93<100=B$. Also, the consistency worst-case upper bound for $\lambda=1$ is $1+\frac{18}{75}=1.24$. 

However, in practice the algorithm is almost 1-consistent on the average. This seems true both for $\lambda=1$ and $\lambda = 0.2$,  and seems independent of the error (see Figure~\ref{fig-1} (a).)  An explanation is that there seem to be very few cases where the performance of the algorithm is not 1. Even when $\sigma=250$, out of 1000 samples there were only 36 such samples for $\lambda=1$, and only 27 samples for $\lambda=0.2$. To these samples the upper bound of $1.24$ applies, but the actual competitive ratios (for values of $T$ which are random, not worst case) are even smaller. So it is not \textit{that} surprising that the average competitive ratio is very close to 1. Further decreasing $w$ (e.g. $r=75,w=18,17,16,1$), seems to produce similarly "trivial" equilibria (not pictured).

\subsection{"Nontrivial" equilibria}

If, instead, we increase $w$ the picture changes: Even increasing $w$ from 19 to $20$, that is by the smallest possible amount, is enough to make the algorithm with predictions improve (albeit slightly) on the predictionless algorithm for tiny values of the variance $\sigma$, while performing worse for larger variances. Further increasing $w$ seems to have a dual effect: it degrades the performance of the predictionless algorithm, on one hand (not surprising, given the worst-case guarantee in Theorem~\ref{thm:ptb}, which also degrades). On the other hand the algorithm with predictions starts outperforming the predictionless algorithm for a larger range of values of the standard deviation $\sigma$. 
We display experiments supporting these claims this by choosing $r=75$ and varying the values of $w$ to $w=20,30,50,70$ (Figure~\ref{fig-1} (d-f)). The performance of the algorithm is similar for all these "nontrivial" equilibria: better than that of the predictionless algorithm for small values of $\sigma$, and degrading as $\sigma$ grows. 

\subsection{"Late equilibria"} 

In Theorem~\ref{thm:ptb} the conditions on the pledged amounts at equilibrium differ between the cases $r\leq B$ and $B+1\leq r\leq 2B-1$. So far, all equilibria tested experimentally were of the first type. It is natural to inquire whether switching to equilibria of the second type (we will call them "late equilibria") makes any difference. 

In Figure~\ref{fig-1} (c,d) we present plots for two such "late equilibria", $(r=125, w=75)$ and $(r=125,w=15)$. The answer to the previous question seems to be negative: the qualitative behavior of the algorithm seems similar to that displayed in nontrivial "early" equilibria. If anything, the algorithm with predictions seems to become even more robust/useful for such "late" equilibria. 

\section{Limitations of our Study} 

As with results in \cite{anon2025}, our results only deal with \emph{pure} equilibria. It is an open problem to adapt the results from that paper (and this one) to the case of mixed equilibria. 

Second, our results are currently limited by our present inability to provide a convincing measure of the accuracy of the others' prediction, interpolating between perfect and pessimal predictions. This limits our definition of equilibria with predictions to self-predictions. Giving a definition of equilibria with predictions that deals with others' predictions as well is an interesting challenge.  
 
\section{Conclusions and Open Problems} 

Our work raises many open questions. 
The most important of them are, we believe, those discussed in the previous section. Equally important is the problem of obtaining more general results on equilibria with predictions, perhaps in the general framework sketched in \cite{anon2025}. 

On the other hand predictions in this paper are one-shot events: agents get this knowledge from the outset. A more natural model is that of an algorithm with \emph{continuous, time-limited predictions}\footnote{real-life predictions, such as weather forecasts, have this nature.}. Perhaps a study of (a multiagent version of) \textit{the Bahncard problem} \cite{fleischer2001bahncard} in a model with an evolving prediction window is feasible. Note that the single-agent version of this problem has already been considered in a perspective with predictions \cite{ding2005risk}, but we believe that even this single-agent case deserves more attention (and work). 

There exists a huge, multi-discipline literature that deals with learning game-theoretic equilibria. In such a setting, an equilibrium arises through repeated interactions (see, e.g. \cite{kalai1993rational,young2004strategic,JMLR:v24:22-0131}). Applying such results to the setting with predictions and, generally, allowing predictions to be \emph{adaptive} (i.e. depend on previous interactions) is left to further research. 

Also interesting is the issue of obtaining tradeoffs between parameters of approximate equilibria (similar to \cite{gollapudi2019online,angelopoulos2020online,wei2020optimal} for the one-agent case). This amounts, in the setting of Theorem~\ref{thm:ptb2}, to deciding for which  $(\lambda_1,\ldots, \lambda_n, \mu_1,\ldots, \mu_n)$ do such equilibria exist. 

Finally, our notion of equilibrium with predictions builds on 
Nash equilibria. But there exist other game-theoretic alternatives: A \textit{Kantian competitive ratio equilibrium} (see e.g. \cite{roemer2019we,istrategame} for details about Kantian optimization) is a strategy profile $\Delta_{P}=(P,P,\ldots, P)$ that minimizes the competitive ratio of $P$ among all inputs of type $\Delta_Q$.  A model studied in \cite{engelberg2016equilibria} characterized (without using this terminology) Kantian equilibria (without predictions) for a number of problems. It would be interesting to understand predictionless Kantian equilibria for multiagent ski-rental. It would be even \textbf{more} interesting to define a suitable notion of Kantian equilibria with predictions. 


 
\bibliography{/Users/gistrate/Dropbox/texmf/bibtex/bib/bibtheory}

\section{Appendix}

First we present full proofs of the theoretical results in the paper. Note that \textbf{we have made no effort to write completely new text, when presenting the Proof Highlights in the main paper}. In other words, the shorter text given there reuses liberally phrases from these full proofs. 

For all notations not defined in the main text we refer the reader to \cite{anon2025}. 

\subsection{Proof of Theorem~\ref{thm:predictionless}}

\begin{proof} In the proof we will need the following characterization of best-response in the Ski-Rental Problem with Varying Prices (Theorem 4.4. in \cite{anon2025}, see that paper for a proof): 

\begin{proposition} Optimal competitive algorithms in Ski Rental with Known Varying Prices are: 
\begin{itemize}[leftmargin=*]
\item[(a).] if a free day exists on day $d=M_{*}(p)+1$ or a bargain day exists on day $d=M_{*}(p)$, then the agent that waits for the first such day, renting until then is 1-competitive. The only other 1-competitive algorithm exists when the first bargain day $b=M_{*}(p)$ is followed by the first free day $f=M_{*}(p)+1$; the algorithm rents on all days, waiting for day $f$.  
\item[(b).] Suppose case (a) does not apply, i.e. the daily prices are $p_1,p_2,\ldots, p_n \ldots $, $1< p_i\leq B$ for $1\leq i\leq M_{*}(p)$, $p_{M_{*}(p)+1}\geq 1$. Then no deterministic algorithm can be better than $c_{OPT}(p)$-competitive, where 
$ c_{OPT}(p):=
min(\{\frac{P_r}{r}:r\leq M_{*}(p)\}\cup\{ \frac{Q_{M_{*}(p)}}{M_{*}(p)}\})$. Algorithms that buy on a day in $argmin(\{\frac{P_r}{OPT_r}:1\leq r\leq M_{*}(p)\}\cup \{\frac{P_r}{M_{*}(p)}: r\geq M_{*}(p), P_r=Q_{M_{*}(p)}\})$ are the only ones to achieve this ratio. 
\end{itemize} 
\label{thm-full-free}
\end{proposition}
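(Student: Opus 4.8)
The plan is to reduce every deterministic algorithm to a one‑parameter family and optimise that parameter. Since the price sequence $p$ is known from the outset and an active agent gets no further information until it goes inactive (which ends its game), a deterministic algorithm is equivalent to committing in advance to a \emph{purchase day} $k\in\mathbb{N}\cup\{\infty\}$ (buy on day $k$ if still active; $k=\infty$ means never buy). On true active time $T$ this pays $\mathrm{ALG}_k(T)=T$ if $T<k$ and $\mathrm{ALG}_k(T)=P_k=k-1+p_k$ if $T\ge k$, while the offline optimum is $\mathrm{OPT}(T)=\min\bigl(T,\min_{1\le i\le T}P_i\bigr)$. The worst‑case competitive ratio of the purchase‑day‑$k$ strategy is then $\rho(k)=\sup_{T\ge 1}\mathrm{ALG}_k(T)/\mathrm{OPT}(T)$, and the Proposition is exactly the statement $\min_k\rho(k)=c_{OPT}(p)$ together with the identification of the minimisers.

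First I would establish the normal form $\mathrm{OPT}(T)=\min(T,M_*(p))$. One direction, $\mathrm{OPT}(T)\ge\min(T,M_*(p))$, is immediate from $\min_{i\le T}P_i\ge\min_iP_i=M_*(p)$. For the other direction, note that the global minimum $M_*(p)$ is attained on or before day $M_*(p)+1$: any day $i$ with $P_i=M_*(p)$ has $p_i\ge 1$ and hence $i\le M_*(p)$, unless $i$ is a free day, in which case $i-1=M_*(p)$, i.e.\ $i=M_*(p)+1$. Hence $\min_{i\le T}P_i=M_*(p)$ for all $T\ge M_*(p)+1$, and, using that $p$ is truncated at its first free day, a one‑line check at $T=M_*(p)$ finishes the equality. (In case~(b) one can say more: $p_i>1$ for $i\le M_*(p)$ forces the minimising day to be $\le M_*(p)-1$, which is the content of the companion lemma quoted from \cite{anon2025}.)

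With $\mathrm{OPT}(T)=\min(T,M_*(p))$ in hand, I would compute $\rho(k)$ by splitting at $T=k$. For $T<k$ the ratio is $T/\min(T,M_*(p))$; for $T\ge k$ it is $P_k/\min(T,M_*(p))$, worst at the smallest admissible $T$. Using $P_k\ge k-1$ in general and $P_k\ge k$ for $k\le M_*(p)$ (a free day cannot sit on a day $\le M_*(p)$), the $T<k$ part never beats the $T\ge k$ part, so $\rho(k)=P_k/k$ for $k\le M_*(p)$, $\rho(k)=P_k/M_*(p)$ for $k\ge M_*(p)$ (the two agree at $k=M_*(p)$), and $\rho(\infty)=\infty$. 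Since $P_k\ge Q_{M_*(p)}$ for every $k\ge M_*(p)$, the infimum over $k\ge M_*(p)$ is $Q_{M_*(p)}/M_*(p)$, reached exactly where $P_k=Q_{M_*(p)}$; hence $\min_k\rho(k)=\min\bigl(\min_{r\le M_*(p)}P_r/r,\;Q_{M_*(p)}/M_*(p)\bigr)=c_{OPT}(p)$, attained precisely on the claimed $\mathrm{argmin}$ set (and $\rho(k)>c_{OPT}(p)$ strictly off it, which gives the ``only ones'' clause). Part~(a) is the specialisation to $c_{OPT}(p)=1$: $\rho(k)=1$ forces $P_k=k$ with $k\le M_*(p)$ --- which together with $P_k\ge M_*(p)$ pins $k=M_*(p)$ to a bargain day --- or $P_k=M_*(p)$ with $k=M_*(p)+1$, i.e.\ a free day; whichever of these days exist yield $1$‑competitive purchase rules, matching ``wait for the first such day'' and ``rent on all days, waiting for day $f=M_*(p)+1$''.

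I expect the main obstacle to be the $\mathrm{OPT}$ normal form and its interplay with the truncation convention: one must be sure the minimising day of $(P_i)$ is correctly located (this is where the two cases genuinely differ), that $M_*(p)$, $Q_{M_*(p)}$ and the $\mathrm{argmin}$ sets refer to a well‑defined prefix of $p$, and that the boundary value $T=M_*(p)$ is treated correctly. Once $\mathrm{OPT}$ is nailed down, extracting $\rho(k)$ and then $c_{OPT}(p)$ and its argmin is routine bookkeeping.
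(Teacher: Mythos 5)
This proposition is imported verbatim from \cite{anon2025} (Theorem 4.4 there); the present paper gives no proof of its own, so there is nothing internal to compare against. Your self-contained argument is correct. The reduction of any deterministic algorithm to a committed purchase day $k$ is legitimate here because the price sequence is known up front and the only information revealed online is continued activity; the normal form $\mathrm{OPT}(T)=\min(T,M_{*}(p))$ is correctly justified via the observation that the truncation at the first free day forces any minimiser of $(P_i)$ to lie on or before day $M_{*}(p)+1$; and the case split giving $\rho(k)=P_k/k$ for $k\le M_{*}(p)$ and $\rho(k)=P_k/M_{*}(p)$ for $k\ge M_{*}(p)$ (using $P_k\ge k$ on days $\le M_{*}(p)$ and $P_k\ge k-1$ in general) yields exactly $c_{OPT}(p)$ and the stated argmin set, with strict suboptimality off it. Your derivation of part (a) as the specialisation $c_{OPT}(p)=1$, pinning the only $1$-competitive purchase days to a bargain day on $M_{*}(p)$ or a free day on $M_{*}(p)+1$, also matches the statement, including the ``only other algorithm'' clause when both exist. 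The one point worth writing out fully in a final version is the boundary bookkeeping you already flag: that $M_{*}(p)$ is an integer (so $T>M_{*}(p)$ means $T\ge M_{*}(p)+1$) and that day $r=M_{*}(p)$ is counted consistently in both halves of the argmin set, where $P_r/OPT_r=P_r/r=P_r/M_{*}(p)$.
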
 

We use this result as follows: We take the single agent perspective in multi-agent ski rental, and investigate the best response to the other agents' behavior. Fixing the algorithms of the other agents means that the agent knows \emph{how much money it would need to contribute on any given day to make the group license feasible}. This effectively reduces the problem to computing a best response to the ski-rental problem with known varying prices, as described in Definition~\ref{ski-rental-known-varying}.
\begin{itemize}[leftmargin=*] 
\item[(a).] Assume otherwise: suppose that $(W_1, W_2,\ldots W_n)$ was a program equilibrium where the license never gets bought. This is because pledges on any day never add up to the licensing cost $B$. Consider the perspective of agent 1. It will never have a free day, hence its best response is to pledge the required amount on its optimal day/first bargain day (if it exists), at the needed cost.
\item[(b).]  

\begin{itemize}[leftmargin=*] 

\item[(i).] To have a competitive ratio equilibrium, every agent $i=1,\ldots, n$ has to play a best-response in the ski-rental problem with varying prices $p^i=(p_j)_{j\geq 1}$,  $p_j=max(B-\sum_{k\neq i}f_{k}(j),0)$. Note that, strictly speaking, the agent is able to make inconsequential pledges on days $j\neq r$: this departs from the parallel with ski-rental with varying prices, but does not affect agents' competitive ratio anyway, since the license doesn't get bought. By point (a), the license is bought on some day $r$. So $\sum_{j=1}^{n} f_{j}(r)\geq B$. In fact $\sum_{j=1}^{n} f_{j}(r)=B$: if it were not the case then any participating agent could lower its price paid on day $r$ (hence its competitive ratio) by pledging just enough to buy the license. Finally, note that (by definition) for every $i=1,\ldots n$, $Z_i=M_{*}(p^i)$. 

\item[(ii.)] Consider  such an agent $i$. The hypothesis implies the fact that it will have a free day on day $r\leq Z_i+1$. By Proposition~\ref{thm-full-free}(a), it is optimal (1-competitive) for the agent $i$ not to pledge any positive amounts and simply wait for its free day.
Therefore $f_i(r)=0$. 
In fact, the agent can make inconsequential pledges before day $r$, as they don't lead to acquiring the license.

\item[(iii).] Consider such an agent $i$. The hypothesis implies the fact that it has a bargain day on a day $k\leq min(r,Z_i)$. Given that the license is bought on day $r$, it is either the case that the first such day is $k=r$ or that $k=r-1$ and $r$ is a free day for agent $i$. In this case the agent doesn't have to pledge on day $r-1$ since it gets the license for free on day $r$.

\item[(iv).] Consider now an agent $i$ such that $f_i(r)> 1$. From the agent's perspective $P_j=j-1+B-\sum_{k\neq i} f_{k}(j)$. In particular $M_{*}(p^i)\leq P_{1}\leq B$. 
The agent has no free/bargain days. We have to show that day $r$ is an optimal day to buy for agent $r$.

Indeed, first consider $j\neq r$, $j\leq Z_i$. 
By the assumption  we have $\sum_{k\neq i} f_{k}(j)\leq B-1-\frac{j}{OPT_r}(f_i(r)-r-1)$ and $j=OPT_j$, 
hence $P_j=j-1+B-\sum_{k\neq i} f_{k}(j)\geq j-1+B- (B-1)+\frac{j}{OPT_r}(f_i(r)+r-1)= \frac{OPT_j}{OPT_r} \cdot (r+f_i(r)-1)= \frac{OPT_j}{OPT_r} \cdot P_{r}$ for $j< r$. 


Similarly, one can show that if $j>Z_i=M_{*}(p)$ then $\frac{P_r}{OPT_r}\leq \frac{P_j}{M_{*}(p)}=\frac{P_j}{OPT_j}$. 

Hence it is optimal for the agent to pledge on day $r$, and the agent does so. 
\end{itemize} 
\end{itemize} 
\end{proof}

\subsection{Proof of Theorem~\ref{foo}}

\begin{itemize} 
\item[-] if $i$ is an agent in case (ii) then it pays $r$ dollars, and the optimum is still $r$.  
\item[-] if $i$ is an agent in case (iii) then it either pays $r$ dollars and the optimum is $r$ (first subcase), or it pays $r-1$ dollars (renting for the first $r-1$ days and receiving the license for free on day $r$), and the optimum is $r-1$ (second subcase). 
\item[-] if $i$ is an agent in case (iv) then it pays $f_{i}(r)+r-1$ dollars. $OPT_{r}$ is $min(r,Z_i)$. 
\end{itemize} 

\subsection{Proof of Theorem~\ref{thm:ptb}}

\begin{itemize}[leftmargin=*] 

\item[(a).]  First of all, it is easy to see that the strategy profiles mentioned in the result are Nash equilibria: to prove this we have to simply show that every agent uses the optimal competitive algorithm, given the behavior of other agents. 

Consider first an agent $i\not\in S$. It will have a free day on day $r$. Given that (from the agent's perspective) $p_j=B$ for $j\neq r$,  $p_r=0$, so we have $M_{*}(p)=min(j-1+B: j\neq r, r-1+p_r)= min(B,r-1)$. If $r\leq B+1$ then it is optimal (1-competitive) for the agent $i$ not to pledge any positive amounts and simply wait for its free day, which realizes minimal cost $M_{*}(p)$ (and the agent does so). If $r>B=M_{*}(p)$ then to find the optimal action we have to compare ratios $\frac{P_j}{j}$ for $j\leq M_{*}(p)$, together with the cost $\frac{Q_{M_{*}(p)}}{M_{*}(p)}$. Given the concrete form of sequence $(p_j)$ we have $\frac{P_j}{j}=1+\frac{B-1}{j}$ for $j\leq B$, the minimal ratio being obtained for $j=B$. On the other hand $P_s=s-1+B\geq 2B-1$ if $s\geq B, s\neq r$, while $P_r=r-1$. As long as $r\leq 2B$, the minimal cost is attained at $r$, while 
$\frac{P_r}{OPT_r}=\frac{r-1}{B}\leq \frac{2B-1}{B}$. So if $r\leq 2B$ then it is optimal to wait for the free day $r$, and the agent does so.  

Consider now an agent $i\in S$.  From the agent's perspective $p_j=B$ for $j\neq r$, $p_r=w_i$, so we have $M_{*}(p)=min(j-1+B: j\neq r, r-1+w_i)= min(B,r-1+w_i)$. Also, note that none of the days different from day $r$ is a free/bargain day, since $B>1$. 

If $w_i= 1$ and $r\leq B$ then $M_{*}(p)=min(B,r)=r$. 
So the agent agent has a bargain day on day $r\leq M_{*}(p)$. Furthermore, day $r+1$ is \textbf{not} a free day ($p_{r+1}=B$) so it is optimal for the agent to buy on day $r$, and the agent does so. 

If, on
 the other hand $w_i>1$ or $w_i=1$ and $r>B$ then we have to show that day $r$ is an optimal day to buy for agent $i$.

\begin{itemize}[leftmargin=*] 

\item{Case 1: $w_i>1$ and $r\leq M_{*}(p)$.} We have $\frac{P_r}{OPT_r}=1+\frac{w_i-1}{r}$ if $r\leq M_{*}(p)$. 

If $j<r\leq M_{*}(p)$ then $B-1\geq w_i-1$ and $\frac{P_j}{j}=
1+\frac{B-1}{j}$. The inequality $\frac{P_j}{j}\geq \frac{P_r}{r}$ follows from these two inequalities. 

If, on the other hand $r<j\leq M_{*}(p)\leq B$ then, since $w_i-1\leq r\frac{B-1}{ M_{*}(p)}$, we have $\frac{P_r}{r}\leq 1+\frac{B-1}{M_{*}(p)}\leq 1+\frac{B-1}{j}=\frac{P_j}{j}$. Finally, if  $j>M_{*}(p)\geq r$ then since $w_i-1\leq r\frac{B-1}{ M_{*}(p)}$, we have $\frac{P_r}{r}=1+\frac{w_i-1}{r}\leq 1+\frac{B-1}{M_{*}(p)}\leq \frac{j+B-1}{M_{*}(p)}=\frac{P_j}{OPT_j}$. Hence if $r\leq M_{*}(p)$ then it is optimal for the agent to pledge on day $r$, and the agent does so.

\item{Case 2: $w_i=1$ and $r>B$.}

In this case $M_{*}(p)=B$. To compute the optimal algorithm we have to compare $\frac{P_j}{j}$ for $j=1,\ldots, B$ with 
$\frac{Q_{M_{*}(p)}}{M_{*}(p)}$. 

On one hand, since $j\leq B<r$, $\frac{P_j}{j}=1+\frac{B-1}{j}\geq 1+\frac{B-1}{B}$. 

On the other hand $Q_{B}=min(2B-1,r)$. So $\frac{Q_{M_{*}(p)}}{M_{*}(p)}=min(1+\frac{B-1}{B},\frac{r}{B})$. For $r\leq 2B-1$ it is optimal for the agent to pledge on day $r$, and the agent does so.

\item{Case 3: $w_i>1$ and $r>B$.}

Indeed, $P_{j}=j-1+B$ for $j\neq r$, $P_{r}=r-1+w_i=r$. So for $j\neq r$ we have $\frac{P_j}{OPT_j}=1+\frac{B-1}{j}$ if $j\leq B$, $\frac{P_j}{OPT_j}=1+\frac{j-1}{B}$ if $B<j\neq r$.   On the other hand $\frac{P_r}{OPT_r}=\frac{r-1+1}{B}=\frac{r}{B}$. As long as $r\leq 2B-1$, $\frac{P_r}{OPT_r}\leq \frac{P_{j}}{OPT_j}$. 

\end{itemize}

Conversely, assume that $(W_1,W_2,\ldots, W_n)$ is a program equilibrium such that the license gets bought on day $r$. If the total pledges on day $r$ were larger than $B$ the one of the agents pledging on day $r$ could lower its pledge, therefore lowering its competitive ratio. Let $S$ be the set of agents whose pledges are positive on that day. By the previous discussion we have $\sum_{i\in S} w_i = B$. 

Assume now that some agent was pledging on a different day than $r$. Then either this pledge is inconsequential or it is too late. The first alternative cannot happen: each agent pledges the amount needed to complete the sum pledged by other agents up to $B$. So the pledge is late. In this case the agent has a free day by the time it would be optimal to pledge, so it would be  optimal to refrain from pledging. 

Consider now an agent $k\in S$ that is pledging on day $r$. There are two possibilities: 
\begin{itemize}[leftmargin=*] 
\item[-] Day $r$ is a bargain day for $k$. That is $w_k= 1$. 
\item[-] Day $r$ is an optimal day for agent $k$. Note that 
the prices the agent faces are $p_i=B$ on days $i\neq r$,  $p_{r}=w_{k}$. In order for pledging on day $r$ to be optimal (with competitive ratio equal to  $\frac{P_r}{r}$) we have to have 
\begin{itemize}[leftmargin=*] 
\item[-] $\frac{P_i}{i}\geq \frac{P_r}{OPT_r}$ for $i\leq M_{*}(p),i\neq r$. That is 
$1+\frac{B-1}{i}\geq 1+\frac{w_k-1}{r}$ if $r\leq M_{*}(p)$, 
$1+\frac{B-1}{i}\geq 1+\frac{w_k-1}{M_{*}(p)}$ if $r>M_{*}(p)$. Let's treat the two cases separately: 
\begin{itemize}[leftmargin=*] 
\item{$r\leq M_{*}(p)$:} The inequality is true if $i\leq r$ since $w_k\leq B$. To make this true for all days $i$, $r<i\leq M_{*}(p)$ we also have to have $\frac{B-1}{M_{*}(p)}\geq \frac{w_k-1}{r}$, that is $(w_k-1)M_{*}(p)\leq r(B-1)$. 
\item{$r>M_{*}(p)$:} The inequality is true since $w_k\leq B$. 
\end{itemize} 
\item[-] $\frac{P_i}{M_{*}(p)}\geq \frac{P_r}{OPT_r}$ for $i\geq M_{*}(p)$. That is $\frac{i-1+B}{M_{*}(p)}\geq \frac{r-1+w_k}{r}$ if $r\leq M_{*}(p)$, $P_i\geq P_r$ if $r>M_{*}(p)$. The first inequality requires that $\frac{B-1}{M_{*}(p)}\geq  \frac{w_k-1}{r}$ (since the bound for $i=M_{*}(p)$ is tightest)
i.e. $(w_k-1)M_{*}(p)\leq (B-1)r$. 

The second inequality implies that $P_{M_{*}(p)}=2B-1\geq P_{r}=r+w_k-1$, since $M_{*}(p)=B$. From this it follows that $r\leq 2B-1$. 
\end{itemize} 
  \end{itemize} 
 \item[b.] An easy application of Proposition~\ref{thm-full-free}. 
 \end{itemize}

\subsection{Proof of Theorem~\ref{thm:ptb2}}
\begin{proof} 
\begin{itemize} 
\item[(a).] Assume otherwise: suppose that $(W_1, W_2,\ldots W_n)$ was a program equilibrium where for some combination of predictions the license never gets bought. This is because pledges on any day never add up to the licensing cost $B$. Consider the perspective of agent $i$. It will never have a free/bargain day. To keep the robustness under control it will need to pledge the required amount on a day $r$ s.t.
\[
\frac{P_{r}}{OPT_{r}}\leq \frac{1}{\lambda_i}-1+\lambda_i \frac{P_{r_1}}{r_1}
\]
that is $r\leq P_{r}\leq T_0:= M_{*}(p)(\frac{1}{\lambda_i}-1)+ M_{*}(p)\lambda_i \frac{P_{r_1}}{r_1}$. 

\item[(b).]  Consider an arbitrary run $p$. Suppose $k$ is an agent that pledges amount $w_k$ to buy the group license on day $r$. From the perspective of agent $k$, the set of prices it experiences in the associated ski-rental problem with varying prices is as follows: $P_r=r-1+w_k$ if $r$ is the day the algorithm pledges on, $P_j=j-1+B$, otherwise. This is because the agents are rational -  they don't pledge unless their pledge leads to buying the license on that day. 

For such a sequence, $M_{*}(p)$ is either $B$ or $r-1+w_k$ and is reached either on day 1 or on day $r$. Indeed, if $r=1$ then $w_k\leq B$ so $M_{*}(p)=w_k=r-1+w_k=min(r-1+w_k,B)=w_k$. If $r>1$ then $M_{*}(p)=min(B,r-1+w_k)$, since $B$ is the price the agent would have to pay on day 1. Therefore 
\[
M_{*}(p)=\left\{\begin{array}{l}
B \mbox{ if } r-1+w_k>B\\
r-1+w_k\mbox{ otherwise.}
\end{array} 
\right.
\]
\[
i_{*}=\left\{\begin{array}{l}
1 \mbox{ if } M_{*}(p)=B\mbox{ or }r=1\\
r\mbox{ otherwise.}
\end{array} 
\right.
\]

To determine $r_1$ note that  $\frac{P_j}{j}=1+\frac{B-1}{j}$ for $j\neq r, j\leq M_{*}(p)$, while $\frac{P_{r}}{OPT_r}=1+\frac{w_k-1}{r}$ if $r\leq M_{*}(p)$, while $\frac{P_{r}}{OPT_r}=\frac{r+w_k-1}{M_{*}(p)}$ if $r> M_{*}(p)$. 
On the other hand $Q_{M_{*}(p)}=M_{*}(p)+B-1$ if $r\leq M_{*}(p)$, $Q_{M_{*}(p)}=min(M_{*}(p)+B-1,r+w_k-1)$, otherwise. So $\frac{Q_{M_{*}(p)}}{M_{*}(p)}=min(1+\frac{B-1}{M_{*}(p)},\frac{r+w_k-1}{M_{*}(p)})$ if $r>M_{*}(p)$, $1+\frac{B-1}{M_{*}(p)}$ otherwise. 

Note that the case $r=M_{*}(p), \frac{B-1}{M_{*}(p)-1}\leq \frac{w_k-1}{r}$ is impossible, since $w_k\leq B$. So
\begin{itemize}[leftmargin=*] 
\item If $r\leq M_{*}(p)$ then 
\[
r_1=\left\{\begin{array}{l}
M_{*}(p) \mbox{ if } M_{*}(p)\neq r, \frac{B-1}{M_{*}(p)}<\frac{w_k-1}{r}\\
r\mbox{ if } M_{*}(p)\neq r, \frac{B-1}{M_{*}(p)}\geq \frac{w_k-1}{r}\\
M_{*}(p)\mbox{ if } M_{*}(p)=r\\
\end{array} 
\right.
\]
\item If $r>M_{*}(p)$ then 
\[
r_1=\left\{\begin{array}{l}
M_{*}(p) \mbox{ if } M_{*}(p)\neq r, 1+\frac{B-1}{M_{*}(p)}<\frac{r+w_k-1}{M_{*}(p)}\\
r\mbox{ if } M_{*}(p)\neq r, 1+\frac{B-1}{M_{*}(p)}\geq \frac{r+w_k-1}{M_{*}(p)}\\
r\mbox{ if } M_{*}(p)=r\\
\end{array} 
\right.
\]

\end{itemize} 
Noting that $M_{*}(p)=r\Rightarrow  \frac{B-1}{M_{*}(p)}\geq \frac{w_k-1}{r}$, we infer that
\[
\frac{P_{r_1}}{r_1}=c_{opt}(p) = \left\{\begin{array}{l}
1+\frac{B-1}{M_{*}(p)} \mbox{ if } \frac{B-1}{M_{*}(p)}< \frac{w_k-1}{r}\\
1+\frac{w_k-1}{r}\mbox{ otherwise }\\
\end{array} 
\right.
\]
hence $
c_{opt}(p) = 1+min(\frac{B-1}{M_{*}(p)},\frac{w_k-1}{r})\mbox{ if }r\leq M_{*}(p)$, 
and 
\begin{align*}
\frac{P_{r_1}}{OPT_1}=c_{opt}(p) = \left\{\begin{array}{l}
1+\frac{B-1}{M_{*}(p)} \mbox{ if } B-1+M_{*}(p)< r+w_k-1\\
\frac{r+w_k-1}{M_{*}(p)}\mbox{ otherwise }\\
\end{array} 
\right.
\\
= \frac{min(M_{*}(p)+B-1, r+w_k-1)}{M_{*}(p)}\mbox{ if }r> M_{*}(p), 
\end{align*}
(this solves point (c)). 
We want to find conditions such that pledging on day $r$ is $\lambda_k$-robust. For $r\leq M_{*}(p)$ this is $\frac{w_k-1}{r}\leq$
\[
\leq \frac{1}{\lambda_k}+ \lambda_k c_{OPT}(p)-2= \lambda_k \cdot min(\frac{B-1}{M_{*}(p)},\frac{w_k-1}{r})+\lambda_k+\frac{1}{\lambda_k}-2
\]

Since $\lambda_k\geq 1$ and $w_k\geq 1$,  $\frac{w_k-1}{r}\leq \lambda_k \frac{w_k-1}{r}$. Also $\lambda_k+\frac{1}{\lambda_k}\geq 2$. 

So when $r\leq M_{*}(p)$ all we have to have is 
\[
\frac{w_k-1}{r}\leq \lambda_{k} \frac{B-1}{M_{*}(p)}+\lambda_k+\frac{1}{\lambda_k}-2,
\]
  the condition in the Theorem. 

Similarly, when $r>M_{*}(p)$ we need to have 
\[
\frac{r+w_k-1}{M_{*}(p)}\leq \lambda_k \frac{min(M_{*}(p)+B-1, r+w_k-1)}{M_{*}(p)}+\lambda_k+\frac{1}{\lambda_k}-2
\]
Again, one of the bounds is trivially true, so we need: 
\[
\frac{r+w_k-1}{M_{*}(p)}\leq \lambda_k \frac{2B-1}{M_{*}(p)}+\lambda_k+\frac{1}{\lambda_k}-2
\]
or 
\[
r+w_k-1\leq \lambda_k (2B-1)+(\lambda_k+\frac{1}{\lambda_k}-2) \cdot B
\]
\item[(c).] Similar to point (b). of Theorem~\ref{thm:ptb}. 
\end{itemize} 
\end{proof} 

\subsection{Proof of Theorem~\ref{eq-one-prediction}}

From the perspective of agent $k$ the problem is a ski-rental problem with the following costs: $p_r=w_k$, $p_j=B$ for $j\neq r$. Therefore $P_r=r-1+w_k$, 
$P_j=j-1+B$ for $j\neq r$.  Applying Algorithm 2 to this input sequence yields the following quantities: 
\begin{itemize}[leftmargin=*] 
\item[-] If $r=1$ then $M_{*}(p)=min(i-1+B:i\neq r, r-1+w_k)=min(B,w_k)=w_k$. Therefore  $i_{*}=r_0=1$. 
To compute $r_1$ we need to compare $\frac{P_1}{1}=w_k$,  $\frac{P_{i}}{i}=1+\frac{B-1}{i}$ for $i<w_k$ and $\frac{Q_{M_{*}(p)}}{M_{*}(p)}=\frac{P_{w_k}}{w_k}$. 
\begin{itemize}[leftmargin=*] 
\item If $w_k=1$ then $r_1=1$, so $r_2=1$. $c_{OPT}(p)=P_1=1$. To choose $r_{3}$ we need that $P_{r_{3}}\leq OPT_{r_{3}}(\lambda-1+\frac{1}{\lambda})$, minimizing the ratio $\frac{P_{r_{3}}}{r_{3}}$. For $j\geq 2$ $P_{j}=j-1+B$ so $\frac{P_{j}}{j}=1+\frac{B-1}{j}>1$. So $r_{3}=1$, since $\frac{P_1}{1}=1$. In this case the algorithm always buys on day 1 and is 1-competitive and 1-robust. 
\item If $w_k\geq 2$ then since $w_k\leq \frac{P_{w_k}}{w_k}=1+\frac{B-1}{w_k}$ 
(equivalently $\frac{w_k(w_k-1)}{B-1} \leq 1$) and $1+\frac{B-1}{w_k}<1+\frac{B-1}{i}$ for $i<w_k$, 
we have $r_1=1$. So $r_2=1$, $c_{OPT}(p)=w_k$. Also $\frac{P_{r_2}}{M_{*}(p)}=1$. 

The condition for $r_{3}$ reads $\frac{P_{r_{3}}}{OPT_{r_{3}}}\leq \lambda-1+\frac{1}{\lambda}w_k$. 
Since $\frac{P_{i}}{i}=1+\frac{B-1}{i}$, to find $r_3$
we attempt to get the largest possible value which satisfies this condition. 

First of all, $r_{3}=1$ works, since condition reads $\frac{w_k}{1}\leq \lambda-1+ \frac{w_k}{\lambda}$, or $(w_k-\lambda)(\frac{1}{\lambda}-1)\geq 0$ which is true, since $\lambda \in (0,1]$. 

If $2\leq r_{3}\leq M_{*}(p)$, $r_3\neq r$, then condition reads $\frac{P_{r_{3}}}{OPT_{r_{3}}}=1+\frac{B-1}{r_{3}}\leq \lambda-1+\frac{w_k}{\lambda}$. This is actually a lower bound on $r_{3}$: 
\[
M_{*}(p)=w_k\geq r_3\geq \frac{\lambda(B-1)}{\lambda^2-2\lambda+w_{k}} =\frac{\lambda(B-1)}{\lambda^2-2\lambda+c_{OPT}}
\]
So either $r_{3}=w_{k}$ works  (but, of course,  may not be the largest value) if $w_k\geq  \frac{\lambda(B-1)}{\lambda^2-2\lambda+c_{OPT}}$, or no value $2\leq r_{3}\leq w_k$  works. 

We now search for $r_{3}\geq M_{*}(p)$. This means we can assume that $OPT_{r_{3}}=w_k$, and the condition becomes $r_{3}-1+B\leq w_{k}(\lambda-1+\frac{1}{\lambda}w_k)$, equivalently $r_{3}\leq 1-B+w_{k}(\lambda-1+\frac{1}{\lambda}w_k)$. To get a value $r_{3}\geq M_{*}(p)=w_k$ we need again that $w_k(\lambda-2+\frac{1}{\lambda}c_{OPT})
\geq B-1$, that is $w_k\geq \frac{\lambda(B-1)}{\lambda^2-2\lambda+w_{k}} $. 
\end{itemize} 
In conclusion, when $r=1$ and $w_k\geq 2$ we have  
\[
r_{3}=\left\{\begin{array}{ll}
1-B+ \lfloor w_k(\lambda-1+\frac{1}{\lambda}w_k) \rfloor  & \mbox{ if } w_k\geq \frac{\lambda(B-1)}{\lambda^2-2\lambda+w_{k}}\\
1 & \mbox{ otherwise.}   
\end{array}
\right. 
\]
Of these two values only the second one is $< M_{*}(p)$. Hence, in the first case $r_3$ does \textbf{not} contribute towards the consistency bound, and   
the consistency bound is 1.  

\item[-] If $2\leq r\leq M_{*}(p)$ then $M_{*}(p)=min(B,r-1+w_k)$, $r_1=r$, $c_{OPT}=1+\frac{w_k-1}{r}$, and 
\[
i_{*}=\left\{\begin{array}{ll} 
1 & \mbox{ if }M_{*}(p)=B  \\
r & \mbox{ otherwise.} 
\end{array} 
\right. 
r_{0}=\left\{\begin{array}{ll} 
1 & \mbox{ if }B<r-1+w_k  \\
r & \mbox{ otherwise.} 
\end{array} 
\right. 
\]

\[
r_2=\left\{\begin{array}{ll} 
\lceil \lambda r\rceil & \mbox{ if }B+ \lceil \lambda r\rceil -1 < r-1+w_k\\
r & \mbox{ otherwise.} 
\end{array} 
\right. 
\]
\[
\frac{P_{r_2}}{M_{*}(p)}=\left\{\begin{array}{ll} 
1+\frac{\lceil \lambda r\rceil -1}{B}
& \mbox{ if }B+ \lceil \lambda r\rceil -1 < r+w_k-1\\
\frac{r+w_k-1}{B}
& \mbox{ if }B+ \lceil \lambda r\rceil -1 > r+w_k-1>B\\
1 & \mbox{ if } M_{*}(p)=r+w_k-1\leq B)
\end{array} 
\right. 
\]
Let's look now for all the $r_3$'s satisfying the condition in the algorithm, so that we can choose the one minimizing ratio $\frac{P_{r_3}}{r_3}$.  There are three cases: 
(a). $1\leq r_3< M_{*}(p)$, $r_3\neq r$. (b). $r_3\geq M_{*}(p)$, $r_3\neq r$. (c). $r_3=r\leq M_{*}(p)$.  Let's deal with these cases in turn: 
\begin{itemize}[leftmargin=*] 
\item[(a).] First, let's look for $1\leq r_3< M_{*}(p)$. We need: 
\[
\frac{P_{r_3}}{r_3}\leq \lambda-1+\frac{1}{\lambda}\cdot c_{OPT}=\lambda-1+\frac{1}{\lambda}\cdot (1+\frac{w_k-1}{r})
\]
For $r_3\neq r$ the condition reads $1+\frac{B-1}{r_3}\leq \lambda-1+\frac{1}{\lambda}\cdot (1+\frac{w_k-1}{r})$. We get 
$M_{*}(p)\geq r_3\geq \frac{(B-1)}{\lambda-1+\frac{1}{\lambda}\cdot (1+\frac{w_k-1}{r})}$, so this case is possible only if $M_{*}(p)[\frac{w_k-1}{r\lambda}+\frac{(\lambda-1)^2}{\lambda}]\geq B-1$, 
 in which case $r_3=M_{*}(p)$, the largest value, minimizes ratio $\frac{P_{r_3}}{r_3}$ among values $\leq M_{*}(p)$. 

\item[(b).] We now look for $r_3\geq M_{*}(p)$ satisfying the condition from the algorithm: 

$\frac{P_{r_3}}{M_{*}(p)}\leq \lambda-1+\frac{1}{\lambda}(1+\frac{w_k-1}{ r})$, or 
$
\frac{r_3-1+B}{M_{*}(p)}\leq \lambda-1+\frac{1}{\lambda}(1+\frac{w_k-1}{r})$, 
yielding 
$M_{*}(p)< r_3\leq 1-B+M_{*}(p)(\lambda-1+\frac{1}{\lambda}(1+\frac{w_k-1}{r}))$. To have such an $r_3$ we need 
$M_{*}(p)[\frac{w_k-1}{r \lambda}+\frac{(\lambda-1)^2}{\lambda}]>B-1$. 
\item[(c).] We also need to check condition for $r_3=r$, that is 
\[
\frac{P_r}{OPT_r}\leq \lambda-1+\frac{1}{\lambda}c_{OPT}= \lambda-1+\frac{1}{\lambda}(1+\frac{w_k-1}{r})
\]
i.e. 
\[
1+\frac{w_k-1}{r}\leq \lambda-1+\frac{1}{\lambda}(1+\frac{w_k-1}{r}), 
\]
that is $
(1+\frac{w_k-1}{r})(\frac{1}{\lambda}-1)+(\lambda-1)\geq 0$. 
which is always true, since it is equivalent to 
$(1-\lambda)[\frac{1}{\lambda}(1+\frac{w_k-1}{r})-1]\geq 0$.
\end{itemize} 

The conclusion is that the largest $r_3$ is either $\geq M_{*}(p)$ or it is equal to $r$, and is the only acceptable value. That is, $r_3$ is equal to: 
\[
\left\{\begin{array}{l} 
r \mbox{ if } M_{*}(p)[\frac{w_k-1}{r\lambda}+\frac{(\lambda-1)^2}{\lambda}]\leq B-1\\

1-B + \lfloor M_{*}(p)(\lambda-1+\frac{1}{\lambda}(1+\frac{w_k-1}{r}))\rfloor \mbox{ otherwise }.  
\end{array} 
\right. 
\]

and $\frac{P_{r_3}}{r_3}$ is equal to 
\[
\left\{\begin{array}{l} 
1+\frac{w_k-1}{r} \mbox{ if }  M_{*}(p)[\frac{w_k-1}{r\lambda}+\frac{(\lambda-1)^2}{\lambda}]\leq B-1\\
1+\frac{B-1}{1-B + \lfloor M_{*}(p)(\lambda-1+\frac{1}{\lambda}(1+\frac{w_k-1}{r}))\rfloor} \mbox{ otherwise. }\\
\end{array} 
\right.
\]
In the second case, though, we have $r_3>  M_{*}(p)$. So in this case $r_3$,  by Proposition~\ref{alg-predictions}, $r_3$ does not contribute to the consistency bound. 

Putting things together, the consistency guarantees for the case $2\leq r\leq M_{*}(p)$ are those described in Table~\ref{sample-table}. 

\item[-] If $M_{*}(p)<r\leq 2B-1$ then $M_{*}(p)=B$, $r_1=r$, $c_{OPT}=\frac{r+w_k-1}{B}$, $i_*=1$, $r_0=1$ (since day $M_*(p)+1$ is not a free day) and 
\[
r_2=\left\{\begin{array}{ll} 
\lceil \lambda r\rceil & \mbox{ if }B+ \lceil \lambda r\rceil -1 < r-1+w_k\\
r & \mbox{ otherwise.} 
\end{array} 
\right. 
\]
\[
\frac{P_{r_2}}{M_{*}(p)}=\left\{\begin{array}{ll} 
1+\frac{\lceil \lambda r\rceil -1}{B} & \mbox{ if }B+ \lceil \lambda r\rceil -1 < r+w_k-1\\
\frac{r+w_k-1}{B} & \mbox{ otherwise.}\\
\end{array} 
\right. 
\]
The condition that $r_3$ has to satisfy is 
$
\frac{P_{r_3}}{OPT_{r_3}}\leq \lambda-1+\frac{r+w_k-1}{\lambda B}$. 
First we check the case $r_3=r$. Since $r>M_{*}(p)$ it follows that $M_{*}(p)=B$ and the condition to satisfy becomes
\[
\frac{r+w_k-1}{B}\leq \lambda-1+\frac{r+w_k-1}{\lambda B}\mbox{ i.e. }
(1-\lambda)(\frac{r+w_k-1}{\lambda B}-1)\geq 0
\]
a condition that is clearly true. So $r_3>M_{*}(p)$, which means that $r_3$ does not contribute to the consistency bound. 

So in this case the consistency guarantee is
\[ 
\left\{\begin{array}{ll} 
1+\frac{\lceil \lambda r\rceil -1}{B} & \mbox{ if }B+ \lceil \lambda r\rceil -1 < r+w_k-1\\
\frac{r+w_k-1}{B} & \mbox{ otherwise.}\\
\end{array} 
\right. 
\]

\end{itemize}



\end{document}